\newtheorem{prop}{Proposition}
\newtheorem{theorem}{Theorem}
\theoremstyle{definition}
\newtheorem{definition}{Definition}
\newcommand{\gs}[1]{\textcolor{blue}{[GS:#1]}}
\newcommand{\comment}[1]{}
\renewcommand{\vec}{\mathbf}
\begin{document}
\date{\today} % delete this line to display the current date

\title{From hard spheres to hard-core spins}

\author{Grace M. Sommers}
\affiliation{Department of Physics, Princeton University, Princeton, NJ 08540, USA}

\author{Benedikt Placke}
\affiliation{Max-Planck-Institut f\"{u}r Physik komplexer Systeme, 01187 Dresden, Germany}

\author{Roderich Moessner}
\affiliation{Max-Planck-Institut f\"{u}r Physik komplexer Systeme, 01187 Dresden, Germany}

\author{S. L. Sondhi}
\affiliation{Department of Physics, Princeton University, Princeton, NJ 08540, USA}

\begin{abstract}
A system of hard spheres exhibits physics that is controlled only by their density. This comes about because the interaction energy is either infinite or zero, so all allowed configurations have exactly the same energy. The low density phase is liquid, while the high density phase is crystalline, an example of "order by disorder" as it is driven purely by entropic considerations. Here we study a family of hard spin models, which we call hardcore spin models, where we replace the translational degrees of freedom of hard spheres with the orientational degrees of freedom of lattice spins. Their hardcore interaction serves analogously to divide configurations of the many spin system into allowed and disallowed sectors. We present detailed results on the square lattice in $d=2$ for a set of models with $\mathbb{Z}_n$ symmetry, which generalize Potts models, and their $U(1)$ limits, for ferromagnetic and antiferromagnetic senses of the interaction, which we refer to as exclusion and inclusion models. As the exclusion/inclusion angles are varied, we find a Kosterlitz-Thouless phase transition between a disordered phase and an ordered phase with quasi-long-ranged order, which is the form order by disorder takes in these systems. These results follow from a set of height representations, an ergodic cluster algorithm, and transfer matrix calculations.
\end{abstract}

\maketitle
    
\section{Introduction}

Systems with constraints but no other interactions are a fascinating corner of statistical mechanics on three counts. First, their equilbrium physics is purely entropic, so any ordering they exhibit is "order by disorder." The canonical example of this somewhat counterintuitive phenomenon is nematic ordering in Onsager's model of thin, hard rods~\cite{Onsager1949}. While at low densities this system lacks orientational order, with rods arranged isotropically, at a critical density the increase in \textit{translational} entropy afforded by aligning the rods outweighs the decrease in \textit{orientational} entropy that such an alignment evidently entails. Second, such systems probe universality in a non-trivial fashion. Viewed as interacting systems, they involve interactions of infinite strength, so many of the standard, perturbative arguments for long-wavelength universality do not directly apply to them. Finally, their dynamics has some simplifying features that have been used in simple cases to reach conclusions that are otherwise difficult. The most famous case of this is a set of results on ergodicity following the seminal work of Sinai~\cite{Sinai1963}.

In this paper we study a family of classical spin models in this class of "constraint-only" systems. Our models, most generally, involve of $M$ component spins $\vec{S}$ of fixed length $\vec{S}^2=1$ on a specified lattice. In this paper we do not specify their dynamics but only the pairwise additive energy function
\begin{equation}
    H = \sum_{\langle ij \rangle} V(\vec{S_i},\vec{S_j})
\end{equation}
where ${\langle ij \rangle}$ indicates that $i,j$ are nearest neighbor sites and the potential energy function has the form
\begin{equation}\label{eq:exclusion}
    V(\vec{S_i},\vec{S_j}) = \begin{cases} 0 &\mbox{if\ \ } |\vec{S_i} -\vec{S_j}| \geq \alpha \\ 
\infty & \mbox{otherwise.}  \end{cases} 
\end{equation}

As the parameter $\alpha$ moves between $0$ and $2$, the neighbors of a given spin are forced to lie outside a solid angle that increases from $0$ to eventually cover the entire unit sphere in $M$ dimensions on which the spins live. As the term hard spins is already reserved for spins of fixed length, we refer to these as hardcore spins. With the inequality as above we have an exclusion model of an antiferromagnetic persuasion, while with the sign of the inequality reversed we obtain an inclusion model. Explicitly, we define hardcore inclusion models by the pairwise potential:
\begin{equation}\label{eq:inclusion}
    V(\vec{S_i},\vec{S_j}) = \begin{cases} 0 &\mbox{if\ \ } |\vec{S_i} -\vec{S_j}| < \alpha \\ 
\infty & \mbox{otherwise}  \end{cases} 
\end{equation}
which is then of a ferromagnetic type. Finally, we will also consider discrete models in which only a finite number of points on the unit sphere are permitted.

One can view the exclusion models as representing a departure from Onsager's original problem in which the translational degrees of freedom are frozen while the orientational degrees of freedom, no longer directors but now spins, face a tunable set of local constraints. By contrast the well studied system of hard spheres/disks can be viewed as a departure in which the orientational freedom is removed and translational freedom retained. That system in three dimensions exhibits a first-order phase transition at a critical density above which crystallization occurs as a means to access greater free volume per particle and increase entropy~\cite{Frenkel2015}. The hard sphere system has been the subject of much work surrounding its dynamical properties and potential connections to the long standing problem of vitrification, and in a subsequent paper we will report results on the dynamics of the hardcore spins studied here. Indeed, the question of dynamics is what got us thinking about this family of models in the first place.

\begin{figure}[hbtp]
\subfloat[]{
\centering
\includegraphics[width=0.4\linewidth]{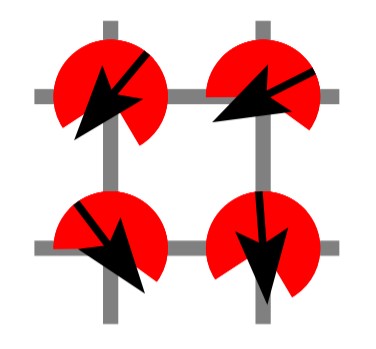}
\includegraphics[width=0.4\linewidth]{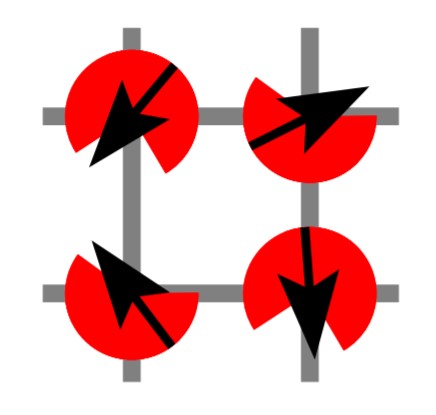}
\label{fig:example}
} \\
\subfloat[]{
    \centering
    \includegraphics[width=0.9\linewidth]{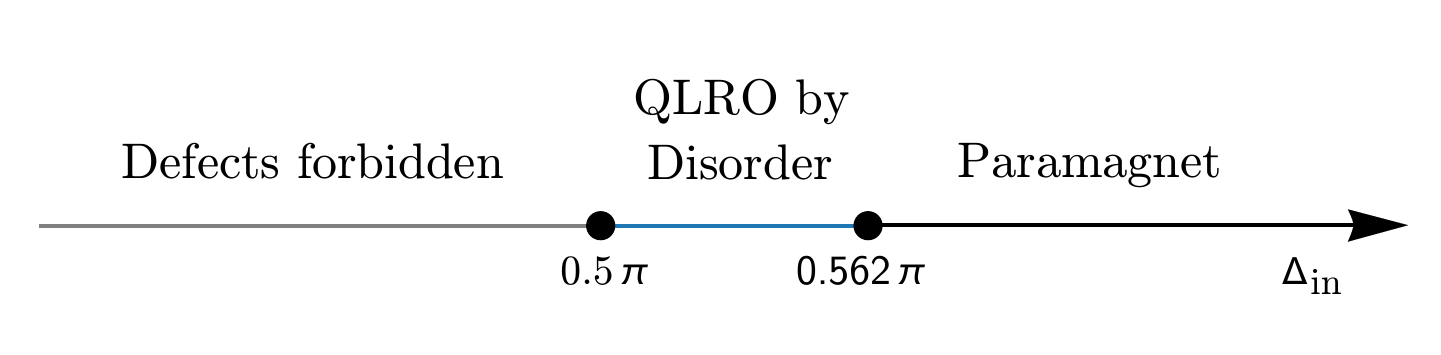}
    \label{fig:phaseline}
}
\caption{The new classes of hardcore spin models studied in this paper, in which the pairwise potential takes the form of Eq.~\ref{eq:exclusion} or Eq.~\ref{eq:inclusion}, specialized to the case $M=2$ on the square lattice. (a) Valid configurations for (left) an inclusion model with inclusion angle $\Delta=0.48\pi$ and (right) an exclusion model with exclusion angle $\pi-\Delta=0.52\pi$. Red shading indicates orientations forbidden by the nearest-neighbor interaction. (b) Sketch of the phase diagram in the XY limit of the inclusion model. Increasing the inclusion angle, defined in Eq.~\ref{eq:Vij}, tunes the system from a KT phase---consisting of both a vortex-forbidden region and a quasi-long-ranged order by disorder region---to a paramagnetic phase.}
\end{figure}

Returning to the statics, we focus in this paper on the simplest case of $M=2$ on the square lattice in $d=2$. When the spins are continuous, we get a hardcore spin model of XY spins (Fig.~\ref{fig:example}), and when they are discretized to take $N$ values, we obtain generalizations of Potts models with $\mathbb{Z}_N$ symmetry. In these cases we are able to search for order by disorder and for universal long wavelength physics dictated by the usual criteria of symmetry and dimensionality. For XY spins, true long-range order is, presumably, ruled out and only algebraically long-ranged order of the Kosterlitz-Thouless (KT) variety~\cite{Kosterlitz} is possible, although we are not aware of a proof that actually covers the hardcore limit. For the $Z_N$ cases, it would appear that true long-ranged order as well as quasi-long-ranged order are both possible on grounds of symmetry and as discussed in Ref.~\cite{Jose1977}.

Our principal results are (i) that both the $Z_N$ models for sufficiently large $N$ and the XY limiting model exhibit KT phases both when the constraint parameter forbids vortices but also for a range where they are allowed, thereby furnishing a case of order by disorder, (ii) that none of our $Z_N$ models or their XY limits exhibit a phase with true long-ranged order and (iii) that for $N \ge 4$ and extending into the XY limit, there are always models with nonzero interaction yet sufficiently small/large exclusion/inclusion angles which exhibit short-ranged correlations. We portray these results in the phase diagrams in Fig.~\ref{fig:phaseline} and Fig.~\ref{fig:phase}, of which the latter requires some definitions that we introduce in Section~\ref{sect:models} below. These results are obtained by a combination of a set of height representations for our $Z_N$ models, transfer matrix calculations, and  finite size scaling using a cluster algorithm, henceforth referred to as the \texttt{reflect} algorithm, whose ergodicity is proven for bipartite lattices.

The paper proceeds as follows. In Section~\ref{sect:models}, we define the models and review the theoretical context for analyzing them. We then describe, in Section~\ref{sect:methods}, the two numerical methods used to construct the phase diagram---the transfer matrix method and the cluster algorithm---as well as the height representations for the vortex free $Z_N$ models. In Section~\ref{sect:results}, we present results for $Z_N$ models and the XY limit, along with a discussion for how the phase diagram fits into and extends the general understanding of critical spin systems. We conclude in Section~\ref{sect:conclude}.

\section{Models and Background}\label{sect:models}
\subsection{Models and Notation}
Eq.~\ref{eq:exclusion} and its inclusion counterpart, Eq.~\ref{eq:inclusion}, express the hard constraint between neighboring spins in terms of the norm of the difference between two $M$-component unit vectors. Alternatively we could formulate this constraint in terms of the distance along the great circle connecting the two vectors, or the inner product between them. The latter formulation was used in a study of %$0(3)$ inclusion models~\cite{Patrascioiu2002}, and 
$O(M)$ and $RP^{M-1}$ constraint models for inclusion angles below $\pi/4$~\cite{Hasenbusch1996}. For $M=2$ these formulations are all equivalent. We find it most natural is to represent spins as complex exponentials $s_j = \exp(i\theta_j)$, where $\theta_j \in [0, 2\pi)$ is the orientation of the $j$th spin on the unit circle. In the XY limit, models are parameterized by an angle $\Delta$, with the nearest-neighbor interaction for inclusion models given by:
\begin{equation}\label{eq:XY-inclusion}
V(s_i, s_j) = 
\begin{cases}
0 & \mathrm{if \, angle}(s_i, s_j) < \Delta \\
\infty & \mathrm{otherwise}.
\end{cases}
\end{equation}
This can be viewed as the zero-temperature limit of a step model, which has mainly been studied at finite temperature for $\Delta=\pi/2$~\cite{Guttmann1973,Guttman1978,Lee1987,Irving1996} but was also examined, via the Migdal approximation, for smaller angles~\cite{Barber1983}. 

At zero temperature, a precision study of the phase transition for the "constraint action" described by Eq.~\ref{eq:XY-inclusion} found that it belongs to the Kosterlitz-Thouless universality class~\cite{Bietenholz2013, Bietenholz2013a}. Where our results overlap with these papers, we are in agreement. However, the present research goes beyond previous works in two major respects.

First, we generalize to $Z_N$ (clock) models, where spins can adopt $N$ orientations at angles $2\pi \sigma/N$ on the unit circle, with $\sigma\in\{0,1,...,N-1\}$~\cite{Cardy1980}. Using complex notation, we say that the spin at site $j$ is in the state $\sigma_j$ if $s_j = \exp(2\pi i \sigma_j/N)$. A hardcore inclusion model can then be characterized by a set of 3 parameters $(N, p_{in}, p')$ where neighboring spins must enclose $<p$ sites on a clock with $N$ orientations:
\begin{align}\label{eq:Vij}
V(s_i,s_j) = \begin{cases}
0 & \mathrm{if \, angle}(s_i, s_j) < 2\pi p / N \\
\infty & \mathrm{otherwise,}
\end{cases}
\end{align}
and $p'$ is the number of allowed orientations for spin $s_i$ given a fixed orientation of neighboring spin $s_j$.

A second generalization of the constraint action comes from inverting the piecewise relation in Eq.~\ref{eq:XY-inclusion} to create an exclusion model, wherein neighboring spins must enclose an angle $\geq\Delta$. Similarly, $Z_N$ exclusion models are parameterized by $(N, p_{ex}, p'$), where neighboring spins must enclose a minimum of $p$ sites on the clock, and $p'$ is again the number of allowed orientations for spin $s_i$ given a fixed orientation of spin $s_j$:
\begin{equation}
p'=N-2p_{ex}+1 = 2p_{in} - 1.
\end{equation}

The distinction between ferromagnetic inclusion and antiferromagnetic exclusion models on frustrated lattices gives rise to essential differences in the phase diagram~\cite{Obuchi2012,Lv2012}. In this paper, we instead focus on the square lattice, which for even widths and lengths is bipartite. This feature results in an equivalence between inclusion and exclusion models: if $S$ denotes a valid (zero-energy) inclusion configuration with inclusion angle $\Delta$, then flipping all the spins on one sublattice yields a zero-energy exclusion configuration $S'$ with angle $\pi - \Delta$, as depicted in Fig.~\ref{fig:example}. This equivalence is captured for finite $N$ by the parameter $p'$: an inclusion model with a given set of parameters $(N,p')$ is fully equivalent to an exclusion model with the same $(N, p')$. We can map both classes of models onto a common set of axes, $N$ vs. inclusion angle $\Delta$, by the relation:
\begin{equation}\label{eq:delta}
\frac{\pi}{N}(p'-1) < \Delta \leq \frac{\pi}{N}(p'+1).
\end{equation}
The inequality here comes from the fact that when the orientations of spins are discrete, there is ambiguity in how the real-valued inclusion angle is defined. In the following analysis, the most sensible definition of the "effective inclusion angle" will prove to be the midpoint of the interval, $\Delta=\pi p'/N$. In the XY limit ($N\rightarrow \infty$), the inclusion angle is uniquely defined.

The mapping between inclusion and exclusion means that we can primarily focus our analysis on inclusion models. However, the mapping only exists when the global rotation by $\pi$ on one sublattice is allowed by the discretization of the spins, namely, for even $N$. On the other hand, inclusion and exclusion models for odd $N$ probe a complementary set of parameters $p'$ (odd $p'$ for inclusion, even $p'$ for exclusion) as the clock lacks the $\mathbb{Z}_2$ symmetry of the lattice. In these cases, it will be necessary to separate examine exclusion models, to probe the entire parameter space. 

One striking example of an exclusion model with no equivalent inclusion model is the $(N,p_{ex}, p') = (3,1,2)$ exclusion model, which is simply a 3-state zero-temperature Potts antiferromagnet. Indeed, zero-temperature $q$-state Potts models can be viewed as special cases of our new classes of clock models, with $p=1$: in our notation, an $(N,1,1)$ inclusion model is a zero-temperature Potts ferromagnet with $q=N$, while an $(N,1,N-1)$ exclusion model is a zero-temperature Potts antiferromagnet (AFM).

Zero-temperature Potts AFMs, which possess macroscopic ground state entropy, have been the subject of extensive research over the past several decades, through evaluation of chromatic polynomials~\cite{Shrock1998,Chang2001,Jacobsen2003,Jacobsen2017}  along with formal proofs for the existence~\cite{Peled2018,Kotecky2014,Huang2013,Peled} or non-existence~\cite{Goldberg2005} of entropy-driven long-ranged order on various lattices. 
A given lattice has a critical value $q_c$, such that for $q>q_c$ a $q$-state Potts AFM is disordered at all temperatures, including $T=0$; for $q<q_c$ there is a finite-temperature phase transition; and for $q=q_c$ there is a zero-temperature critical point~\cite{Salas1998}. On the square lattice, $q_c=3$, a rigorous result~\cite{Chandgotia2018,Duminil-Copin2019,Ray2020} which tells us that the $(3,1,2)$ exclusion model is critical. 

Our class of hardcore spin models allows us to generalize these observations about Potts models in two ways. First, by considering families of fixed $\Delta$ according to the mapping in Eq.~\ref{eq:delta}, we can analyze trends in ferromagnetic and antiferromagnetic models upon approach to the XY limit. Second, by constructing a height representation we show that the 3-state Potts antiferromagnet belongs to a family of models with $p'=2$  which all have the same critical behavior, and in fact each value of $p'$ hosts a critical family.

\subsection{General Remarks about Phase Diagrams in 2D}
Having defined our new classes of models, we now situate these models within the broader context of phase transitions in spin systems.

In the absence of a temperature $T$, the phase of the system is determined instead by the tuning parameters $(N,p')$ (or $\Delta$ in the XY limit). Moreover, since all allowed configurations have the same energy, a transition at some critical angle $\Delta_c$ must be driven by the quest to maximize entropy. By analogy to the phase transition in systems of hard spheres{~\cite{Alder1957}}, which crystallize at a critical density as a means to access greater free volume per particle and increase entropy, for inclusion models we would expect ferromagnetic order to set in below $\Delta_c$, as a means to access a greater range of allowed orientations per spin.

In classical systems with short-ranged interactions in fewer than 3 dimensions, however, a continuous symmetry cannot be broken at finite temperature, by the Mermin-Wagner Theorem~\cite{Mermin1966}. In two dimensions, systems instead undergo a Kosterlitz-Thouless (KT) transition~\cite{Kosterlitz} to a continuous line of critical points. The critical phase is characterized by quasi-long-ranged order (QLRO), as correlations decay algebraically, rather than exponentially, to zero~\cite{Goldenfeld1992}:
\begin{equation}\label{eq:crit}
G(r) \equiv \langle \vec{s}(0) \cdot \vec{s}(r) \rangle \propto \frac{1}{r^{d-2+\eta}}
\end{equation}
where $G(r)$ is the spin-spin correlation function, $d=2$ is the dimensionality and $\eta$ is a critical exponent, which varies continuously with temperature in the KT phase. This follows from applying the renormalization group to a Gaussian spin-wave action with irrelevant vortex operators, which become relevant at vortex unbinding where $\eta=1/4$. 

Turning now to $Z_N$ models, for which the order parameter has a discrete symmetry, there are three potential phases: paramagnetic, critical, and ordered. A pioneering study of the clock model phase diagram considered a ferromagnetic XY model with symmetry-breaking interactions $h_N\sum_\vec{r}\cos(N\theta(\vec{r}))$~\cite{Jose1977}. These interactions become relevant at an effective coupling of $K_{eff}=N^2/8\pi$, implying that within the critical phase:
\begin{equation}\label{eq:jose}
    \frac{4}{N^2} \leq \eta \leq \frac{1}{4}.
\end{equation}
Subsequent work~\cite{Elitzur1979,Cardy1980,Frohlich1981,Borisenko2011,Kim2017,Li2020} has confirmed that for $N\geq 5$ this intermediate phase is bounded by two KT transitions, while for $N\geq 4$ the critical phase vanishes, leaving a second-order phase transition between order and disorder. Antiferromagnetic clock models are less well-studied, but it was argued that odd $N$-state antiferromagnetic models belong to the same universality class as $2N$-state ferromagnetic models~\cite{Cardy1981}.\footnote{The purported correspondence between $N$-state antiferromagnets and $2N$-state ferromagnets again comes from considering symmetry-breaking interactions with couplings $h_N$, while for pure clock models ($h_N \rightarrow \infty$), subsequent analysis challenged this interpretation~\cite{DenNijs1985}. However, we will find below that the height representation for odd $N$ exclusion models gives rise to the same lower bound of $1/N^2$ on the critical exponent $\eta$ as that obtained in Ref.~\cite{Cardy1981}.} In both ferromagnetic and antiferromagnetic models, as $N\rightarrow\infty$ we recover the XY model, and the symmetry-broken phase disappears at finite temperature.

The extent to which this canonical lore will hold for constrained systems depends on the assumption that the hard constraint becomes an irrelevant operator upon coarse graining. Although simulations have demonstrated that long-wavelength fluctuations do indeed destroy long-ranged order in 2D systems of hard disks~\cite{Illing2017}, {which instead possess an intermediate hexatic phase with quasi-long-ranged orientational order and a solid phase with quasi-long-ranged translational order~\cite{Engel2013},} the Mermin-Wagner Theorem is not guaranteed to hold for models with hard interactions~\cite{Simon1981,Halperin2019}. One extension to the theorem has come from analysis of random surface models on the two-dimensional torus, which map a set of vertices to real values $\{x\}$ subject to the nearest-neighbor potential $U(x_i-x_j)$~\cite{Mio2015}. A lower bound on the variance of fluctuations logarithmic in the length of the torus is proven for a large class of potentials including the hammock potential, defined as $U(x) = 0$ if $|x|\leq 1$ and $U(x)=\infty$ otherwise. {We have been informed that a followup work currently in preparation extends this method of proof to 2D XY models with rotationally invariant nearest-neighbor interaction, which include our classes of hardcore models of XY spins~\cite{Peled2021}.}

One clear departure of the hardcore spin models from the KT picture is in the region where vortices are explicitly forbidden. For inclusion angles below $\pi/2$, Eq.~\ref{eq:Vij} dictates that the winding number around a plaquette, and hence the winding along any closed loop which does not intersect the boundary, is zero. This amounts to setting the strength of the vortex insertion operator, which is a function of the dual field, identically to zero. This is to be contrasted with the role of defects in the typical KT phase, in which they bind into vortex-antivortex pairs up to the transition temperature~\cite{Fradkin2013}. Thus we might instead call the vortex-forbidden region a Patrascioiu-Seiler phase, after the eponymous constraint~\cite{Patrascioiu1993} applied to a 2D system of XY spins with ferromagnetic interaction in Ref.~\cite{Aizenman1994} to induce algebraic correlations at all temperatures, including $T=\infty$.
 
\section{Methods}\label{sect:methods}
This section introduces the three main methods of our study: transfer matrix analysis on semi-infinite cylinders, a Monte Carlo cluster algorithm on finite lattices of aspect ratio 1, and a height representation within the vortex-forbidden regime. Together, these methods are used to extract $\eta$, the critical exponent of the equal-time spin correlation function in the critical phase (Eq.~\ref{eq:crit}), from which we can assess how these new models fit inside our general understanding of 2D spin systems.

\subsection{Transfer Matrix Method}
For tori of length $L$ and width $W$ with the nearest-neighbor interaction defined by Eq.~\ref{eq:Vij}, the partition sum can be factored into a product over adjacent columns of $W$ spins. This is encapsulated in the $N^W \times N^W$ transfer matrix $T$, as shown in Fig.~\ref{fig:transfer}. Numbering configurations of $W$ spins by the index $i\in\{1,...,N^W\}$, $T_{ij}=1$ if there exists a valid state in which $i$ and $j$ are adjacent columns, and $T_{ij}=0$ otherwise.
\begin{figure}[hbtp]
\includegraphics[width=\linewidth]{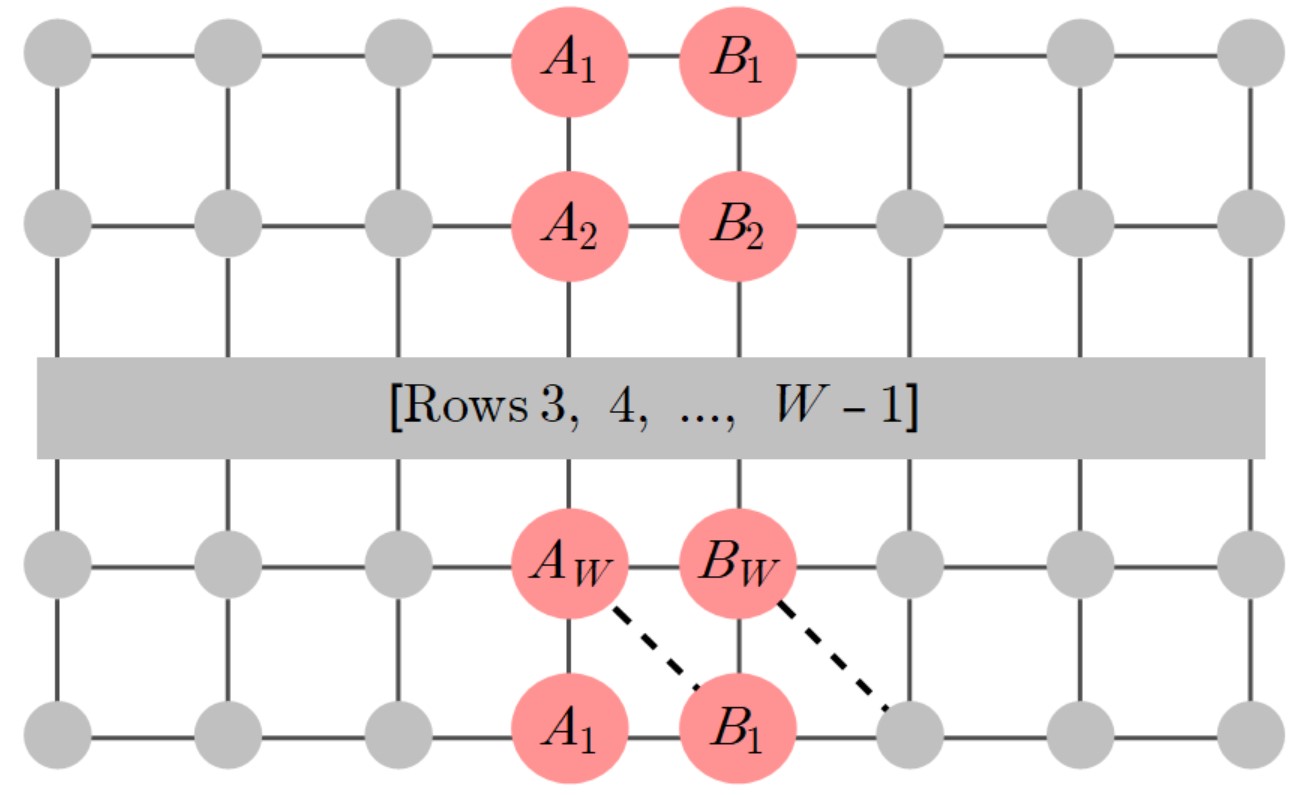}
\caption{Nearest neighbor bonds with periodic boundary conditions on a patch of a torus of width $W$. The element $T_{AB}$ is the Boltzmann weight due to the energy within and between adjacent columns in spin configurations $A$ and $B$. The bottom row is a copy of the top row for periodic boundary conditions; for twisted periodic boundary conditions, used for exclusion models when $W$ is odd, the dashed bonds are used instead.\label{fig:transfer}}
\end{figure}
The matrix $T$ depends on the boundary conditions applied in the transverse direction. We use periodic boundary conditions (PBCs) in all cases except for exclusion models at odd widths. In that case, the periodic boundary conditions are shifted by one lattice constant to make the lattice bipartite, thus circumventing the domain wall of infinite energy cost that would arise in an antiferromagnetic state with regular PBCs. This allows us to extract meaningful data for all numerically feasible widths.

We numerically compute the largest eigenvalues of $T$. If the eigenvalues are nondegenerate, then in the limit $L\rightarrow \infty$, the free energy per ring is given by~\cite{Goldenfeld1992}:
\begin{align}\label{eq:f}
f &= -\frac{1}{\beta L} \log \mathrm{Tr} T^L = -\frac{1}{\beta L}\log \sum_\Lambda \Lambda^L \notag \\
&\rightarrow -(1/\beta) \log |\Lambda_1|
\end{align}
where $\Lambda_1$ is the eigenvalue of maximum absolute value.
For a hard potential, the partition sum is taken over zero-energy configurations only, so from the transfer matrix we can also directly read off the entropy per ring on an infinite cylinder:
\begin{equation}\label{eq:s}
s = k_B\log|\Lambda_1| = \lim_{L\rightarrow\infty} \log (\Omega/L).
\end{equation}

The transfer matrix method diagnoses the critical phase via the well known mapping between the infinite plane and the cylinder in CFTs~\cite{Cardy1985}. Recall that discretizing a model on a long, narrow cylinder of finite width and infinite length makes it effectively one-dimensional. This destroys (quasi) long-ranged order, so that, for a system which in two dimensions is critical, Eq.~\ref{eq:crit} is modulated by an exponentially decaying function whose correlation length is given by the gap between the two largest eigenvalues~\cite{NightingaleP;Blote1983}:
\begin{equation}\label{eq:cylinder}
G(r) \sim r^{-\eta} e^{r/\xi} \quad \mathrm{where} \quad \xi = (\log |\Lambda_1/\Lambda_2|)^{-1}.
\end{equation}
The conformal mapping shows that $\xi$ scales linearly with width:
\begin{equation}\label{eq:A}
\xi(W) = W/A \quad \mathrm{where} \quad A=\pi \eta.
\end{equation}
Computational constraints limit us to small $N$ and $W$, for which linear fit to $\xi(W)$ tends to overestimate $\eta$. 
Alternatively, we can calculate a width-dependent exponent $\eta(W)$ by plugging data at different widths into Eq.~\ref{eq:A}, and then performing a fit to determine $\eta_\infty$: 
\begin{equation}\label{eq:etainf}
\eta(W) = \eta_\infty + C/W^2
\end{equation}
which tends to underestimate $\eta$.
Therefore, we mainly use the transfer matrix method as a tool to identify critical models, for which a more accurate estimate of the critical exponents can be determined using the cluster algorithm detailed in the next section. For finite lengths, the correlation function measured by the cluster algorithm can be checked against $G(r)$ as determined by explicit multiplication of the transfer matrix. Most important for our purposes is Eq.~\ref{eq:s}, by which we identify families of models with the same macroscopic entropy density.

\subsection{Cluster algorithm}
To access larger system sizes, and the XY limit, we use a Monte Carlo algorithm that remains ergodic in the critical phase. An algorithm with single-spin updates does not fit these criteria, since it cannot unwind defects, and the acceptance rate of single-spin moves becomes prohibitively low as the inclusion angle decreases. To probe configuration space efficiently, a cluster algorithm is required instead. To do so, we need to specify (1) how the cluster is constructed and (2) what operation is applied to this cluster.
 
A central ingredient in answering question (1) is the concept of pocket Monte Carlo, which constructs a pocket, or bag of elements (in our case, $Z_N$ or XY spins), to which the cluster operation is applied~\cite{Krauth2003}. We remove an item from the pocket by applying the operation to it, which then causes other elements (e.g., the neighbors of the element just transformed) to be added to the pocket. The process terminates when the pocket is empty.

\begin{algorithm}[hbtp]
\SetAlgoLined
\SetKwInOut{Output}{Output}
\SetKwInOut{Input}{Input}
\Input{Configuration of spins $\{\vec{s}_i\}_{i=1}^{N_s}$ at iteration $n$}
\Output{Configuration of spins $\{\vec{s}_i\}_{i=1}^{N_s}$ at iteration $n+1$}
$\mathcal{P} \gets$ empty bag\;
$j \gets$ random integer from 0 to $N_s-1$ \;
add $\vec{s}_i$ to $\mathcal{P}$ \;
%$\mathcal{P} \gets \{\vec{s}_j\}$ \;
$\theta \gets$ random floating point number from 0 to $2\pi$\;
\While{$\mathcal{P}$ is not empty} {
remove one spin $\vec{s}$ at random from $\mathcal{P}$\;
$\vec{s} \gets$ reflect($\vec{s}, \theta$) \;
\For{$\vec{s}_k$ \normalfont{\textbf{in} neighbors of} $\vec{s}$} {
	\If{$\mathrm{angle}(\vec{s}_k, \vec{s})$ \normalfont{violates hard constraint}}{
	%$\mathcal{P} \gets \mathcal{P} \cup \vec{s}_k $\;
	add $\vec{s}_k$ to $\mathcal{P}$ \;
}
}
}
\Return $\{\vec{s}_i\}_{i=1}^{N_s}$ \;
\caption{Reflect\label{alg:cluster}}
\end{algorithm}

The operation applied to the cluster in Algorithm~\ref{alg:cluster} is a reflection about the axis $\theta$. The move begins by reflecting one spin, $\vec{s}_j$, about this axis and builds the cluster with spins that violate the hard constraint (line 9). This takes inspiration from the pivot cluster algorithm for hard disks~\cite{Dress1995} and pocket dimer algorithm for non-overlapping dimers~\cite{Krauth2002}, in which the items in the pocket are reflected about a chosen axis or point, and overlapping disks/dimers are added to the pocket. However, while these pivot algorithms implement a reflection in real space, the \texttt{reflect} algorithm performs the transformation in spin space. In this sense it is more akin to the Wolff algorithm~\cite{Wolff1989}, famous for its application to the 2D Ising Model but also generalizable to XY spins~\cite{Luijten2006}. {Previous works have used this algorithm to simulate XY~\cite{Bietenholz2013,Bietenholz2013a} and $O(3)$~\cite{Hasenbusch1996} inclusion models, but Algorithm~\ref{alg:cluster} can also be applied to exclusion and inclusion clock models, simply by modifying line 4 to sample from a discrete set of reflection axes.}

Since the angles between each pair of spins in the cluster are preserved, it is straightforward to prove that the \texttt{reflect} algorithm respects detailed balance. Moreover, for the models under consideration the algorithm is ergodic, as proven in Appendix ~\ref{sect:app1}. A given spin is added to the pocket at most once, so the outer loop is guaranteed to terminate after $\leq N_s$ iterations, where $N_s$ is the number of spins. In the worst case, the cluster spans the entire lattice, and the move is a global reflection. This is a problematic aspect of the pivot cluster algorithm when applied to systems of monodisperse disks: the algorithm, which becomes inefficient when clusters are too large, has a percolation threshold below the liquid-solid transition density~\cite{Krauth2003}. Thus we should be wary of a similar issue occurring for hardcore spins. However, we instead find that the scaling of cluster sizes closely tracks the critical properties of the underlying spin system. Namely, as with the Wolff algorithm applied to Potts models, the average cluster size scales as $\langle s \rangle \propto L^{2-\eta'}$, where $\eta' \approx \eta$, with a relative error $\lesssim 0.01$ for inclusion angles near $\pi/2$. The potential exactness of this relation is explored further in Appendix~\ref{sect:app2}.

The critical exponent $\eta$ is determined from the Fourier spectrum of the spin correlation function, denoted $G(\vec{k})$. This spectrum has a peak at $\vec{k} = (0, 0)$ for inclusion models, corresponding to the uniform susceptibility $\chi_u$, and $\vec{k} = (\pi,\pi)$ for exclusion models, corresponding to the staggered susceptibility $\chi_s$. Given a model whose real-space correlation function decays as a power law (Eq.~\ref{eq:crit}), the susceptibility on an $L\times L$ lattice scales as $L^{2-\eta}$, times sub-leading logarithmic corrections~\cite{Hasenbusch2005}. Performing a linear fit to $\log \chi$ vs. $\log L$ provides an estimate of $\eta$. For small system sizes the best fit for $\eta$ is sensitive to the minimum system size $L_{min}$ included in the fit, but we find that the estimated value is roughly stable for $L_{min}\geq 24$, and therefore start all fits at $L=24$. %~\gs{could be made more rigorous by calculating the goodness of fit as a function of $L_{cutoff}$.} 
By contrast, in the paramagnetic phase where $G(r)$ decays exponentially to zero, $\chi(L)$ approaches a plateau at a scale set by the correlation length.

We employ a variety of checks to verify that the algorithm converges to the stationary distribution of the system. First, to check that the system decorrelates from its initial state, we run the algorithm from different initial configurations and measure the standard deviation between runs. To generate disordered initial states, we soften the potential into a continuous, differentiable function of the nearest-neighbor angle. The system is initialized in a random configuration, which is likely to have nonzero energy due to violated constraints. The initialization algorithm then alternates between (1) gradient descent to the nearest minimum of the soft potential and (2) a full sweep of zero-temperature single-spin moves, which are accepted only if they reduce the energy. This process is repeated until a zero-energy state is found, which then becomes the initial state for the  cluster algorithm, or, if no zero-energy state is found after a maximum number of attempts, the ordered state is used instead. For $Z_N$ models, a differentiable soft potential cannot be defined, so we instead initialize in one of the ideal states introduced below. We also use the ideal state initialization for large systems of XY spins at low inclusion angles ($\Delta < \pi/2$) for which the minimization algorithm generally fails to find a zero-energy state.

A second check is that the spatial correlations determined for $Z_N$ models on tori of small widths should match those determined from the transfer matrix. This comparison was performed for three different models on tori of width 4 and aspect ratios 1, 2, 4, and 8, with a maximum relative error of $2.54\times 10^{-3}$ for $G(r)>0.005$.

To control for the effects of critical slowing down, a single Monte Carlo step was defined as follows. First, we defined one MCS as a single cluster move and calculated the autocorrelation function of the susceptibility. The integrated correlation time was conservatively estimated to scale with system size $L$ no faster than $\sim L^{1/2}$. For comparison, this is not as efficient as the Wolff algorithm for the Ising Model, which scales logarithmically or with a very small dynamical exponent near the critical point~\cite{Baillie1990,Du2006}, but it is a vast improvement on the $O(L^2)$ critical slowing down observed for local algorithms ~\cite{Krauth}. One Monte Carlo step was then redefined to consist of $C L^{1/2}$ cluster moves, with $C$ chosen such that on the smallest system size considered (8x8), 1 MCS=8 cluster moves. Through this redefinition, we could then generate at least as many independent samples with increasing system size by running the algorithm for the same number of steps.

Each run consisted of $10^4$ MCS of equilibration followed by $10^5$ MCS of recording. The correlation time was (much) less than 100 MCS for all models and system size considered, so all error bars were estimated by averaging the data over chunks of 100 MCS and computing the standard error over 1000 chunks. A data bunching analysis verified that these chunks were uncorrelated~\cite{Krauth}.

\subsection{Height representation for Defect-Free Models}
To complement this algorithmic approach, we formulate a height representation for $Z_N$ models with $N>2(p'-1)$. This condition forbids defects in the height field, which in the case of inclusion models, can be interpreted simply as forbidding vortices and antivortices. In the height representation, also referred to as an interface model, the spin variables are mapped to coarse-grained height variables, an approach that has been used to study a range of critical ground state ensembles~\cite{Zeng1997,Burton1997} including the 3-state Potts AFM on the square lattice~\cite{Salas1998} and the 4-state Potts AFM on the triangular ~\cite{Moore2000} and Kagom\'e lattices~\cite{Huse1992}.

For our class of inclusion and exclusion models, we map from clock variables $\{\sigma\}=\{0,1,...,N-1\}$ to heights $\{h\}$ as follows:
\begin{itemize}
\item At the origin, define
\begin{equation}\label{eq:origin}
h(0) = \sigma(0).
\end{equation}
\item The local change in the height field from site $x=(x_1, x_2)$ to nearest neighbor site $y$ is:
\begin{equation}\label{eq:modN}
h(x) - h(y) = \begin{cases} (\sigma(x) - \sigma(y)) & \mathrm{inclusion} \\ 
(\sigma(x) - \sigma(y) - N/2) & \mathrm{exclusion} 
\end{cases}\mod N
\end{equation}
with the modulus chosen such that, in both cases,
\begin{equation}\label{eq:pm1}
|h(x) - h(y)| \leq (p'-1)/2.
\end{equation}
\item  For $N>2(p'-1)$, $\Delta h = 0$ around a plaquette; this is what is meant by the absence of defects. Only if this condition is satisfied can $h$ be uniquely defined:
\begin{equation}\label{eq:mod}
h(x) = \begin{cases}
\sigma(x) & \mathrm{inclusion} \\
\sigma(x) - \alpha(x)N/2 & \mathrm{exclusion}
\end{cases} (\mathrm{mod \,} N)
\end{equation}
where $\alpha(x) \equiv (x_1+x_2) \mod 2$ is 0 on the even sublattice and $1$ on the odd sublattice. By the notation (mod $N$) we mean that if the modulo operation is taken on both sides, then the equality holds. For inclusion models, the distinction between $h$ and $\sigma$ has a simple interpretation: while the spin state $\sigma$ is only defined modulo $N$, $h$ has been "lifted" to $\mathbb{Z}$~\cite{Cardy}. 
\end{itemize}
{We note in passing that the above construction for inclusion models has also appeared in the mathematical literature~\cite{Peled2017a, Peled2017}, where height fields with $p'=2m-1$ are known as $m$-Lipschitz functions. These works primarily focus on high dimensions, where long-range order can be proven in the vortex-forbidden regime.}

Returning to $d=2$ and a more physically motivated perspective, given a uniquely defined height field $h$, we posit that it is governed by the effective Hamiltonian:
\begin{equation}\label{eq:effective}
F = \int d^2 x \left(\frac{K}{2}(\nabla h)^2 + V_{lock}(h)\right)
\end{equation}
where $K$ is the stiffness and $V_{lock}$ is the so-called locking potential. The latter favors heights on the ideal state lattice $\mathcal{I}$, the set of periodic, macroscopically flat height configurations with maximal entropy (ideal states)~\cite{Kondev1995b,Burton1997}. 

In writing down the effective Hamiltonian, a typical ground state is assumed to be a patchwork of ideal state domains. We refer the reader to Ref.~\cite{Salas1998} for a detailed review of the terminology and briefly recapitulate the main definitions here. Two instances of the same domain $X$ have equal heights modulo $\tilde{h}$, where $\tilde{h}$ is an element of the repeat lattice $\mathcal{R}$. The repeat lattice is a subgroup of the equivalence lattice $\mathcal{E}$, the set of elements $a$ such that $a+\mathcal{I}=\mathcal{I}$. Thus, $V_{lock}$ has the same periodicity as the equivalence lattice, which in our case has a one-dimensional representation with period $2\pi/g_0$.

If the locking potential is relevant ($\eta_{lock} < 2d=4$), the model is said to be in the smooth phase, with long-ranged order in the height field. On the other hand, if $\eta_{lock}>2d$, the locking potential is irrelevant and the interface is rough, with logarithmic correlations in the height field:
\begin{equation}\label{eq:hcorr}
    G_h(x,y) \equiv \langle(h(x)-h(y))^2\rangle \sim \frac{1}{\pi K}\log|x-y|.
\end{equation}
The critical exponents associated with relevant vertex operators in the original spin model can then be determined by expressing them as periodic functions of the height field and applying Eq.~\ref{eq:hcorr}. Due to the Gaussian form of the action, an operator $O$ with period $2\pi/g$ scales as:
\begin{align}\label{eq:gaussian}
    \langle O^*(x) O(y) \rangle &= \langle \exp[i g (h(x)-h(y))]\rangle \notag \\
    &= \exp[-G_h(x,y) g^2 /2]
\end{align}
implying a critical exponent: 
\begin{equation}\label{eq:eta-K}
    \eta_O = \frac{g^2}{2\pi K}.
\end{equation}
The roughening transition therefore occurs when $g_0^2/2\pi K = 4$.

To determine the critical exponent of the spin correlation function, denoted simply as $\eta$, we express the magnetization in terms of the height field as:
\begin{equation}\label{eq:unstag}
e^{\pm 2\pi i h(x)/N} = M_1(x) \pm i M_2(x)
\end{equation}
where $\vec{M}=(M_1, M_2)$ is the unstaggered magnetization for inclusion models, and the staggered magnetization for exclusion models. Eq.~\ref{eq:eta-K} then implies:
\begin{equation}\label{eq:K}
\eta = \frac{2\pi}{N^2 K}.
\end{equation}

For inclusion, ideal states consist of spins randomly sampled from a set of $p'$ consecutive spin states. Equivalently, for even $N$ exclusion models, we choose a set $S$ of $(p'+1)/2$ consecutive integers as the spin states on the even sublattice and populate the odd sublattice with the integers $(S + N/2) \mod N$ (i.e., rotate the spins on the even sublattice by $\pi$), so per Eq.~\ref{eq:mod}, the average height is constant and equal on the two sublattices. In both cases, there are $N$ ideal states, in one-to-one correspondence with the average height mod $N$. The magnetization has the periodicity of the repeat lattice $\mathcal{R}=N\mathbb{Z}$, whereas the locking potential has the periodicity of the equivalence lattice $\mathcal{E}=\mathbb{Z}$. Thus, for the system to occupy the rough phase, this implies the relation:
\begin{equation}\label{eq:lock}
4 < \eta_{lock} = \frac{2\pi}{K} = N^2 \eta \qquad (\mathrm{odd \,} p').
\end{equation}
The lower boundary of the critical phase, $\eta=4/N^2$, is consistent with Inequality~\ref{eq:jose} derived for ferromagnetic XY models with clock perturbations.

For exclusion models with even $p'$ (odd $N$), the absence of $\mathbb{Z}_2$ symmetry leads to an asymmetry in the ideal states: choose a set $S = \{\sigma_i\}$ of $p'/2$ consecutive clock orientations as the allowed states on one sublattice, and a set $T$ of $p'/2 + 1$ integers on the other sublattice: $T= (S + (N-1)/2) \cup \{\sigma_{p'/2} + (N+1)/2\}$. Owing to this asymmetry, there are now $2N$ ideal states, uniquely identified by their average heights mod $N$. Referring back to Eq.~\ref{eq:mod}, note that $h(x)$ takes integer values on the even sublattice, and half-integer values on the odd sublattice. The ideal state lattice and equivalence lattices are $\mathbb{Z}/2$, whereas the repeat lattice (which again has the same periodicity as $\vec{M}$) is $N\mathbb{Z}$. This implies that within the rough phase:
\begin{equation}\label{eq:lock2}
4 < \eta_{lock} = \frac{8\pi}{K} = 4N^2 \eta \qquad (\mathrm{even \,} p').
\end{equation}
Therefore, in contrast with Eq.~\ref{eq:lock}, the lower boundary of the critical phase is characterized by $\eta = 1/N^2$. {This agrees with the bound on the critical phase derived from a renormalization group treatment of odd $N$ antiferromagnetic clock models with standard action~\cite{Cardy1981}. Interestingly, the RG arguments applied exclusively to $T\neq 0$, whereas the height representation allows us to derive the same bound at $T=0$, for which these clock models reduce to $p'=2$ exclusion models.} 

\comment{
Another difference from even $N$ is, whereas for even $N$ the ideal states have zero net magnetization, for odd $N$ ideal states do have net magnetization. The magnetization is determined by the choice of $S$, independent of which sublattice is populated by $S$. The height field satisfies:
\begin{equation}
h(x) = \frac{x_1 + x_2}{2} \mod 1
\end{equation}
so:
\begin{equation}
e^{\pm 2\pi i h(x)} = (-1)^{x_1 + x_2}
\end{equation}
which, combined with Eq.~\ref{eq:stag}, implies:
\begin{equation}
e^{\pm 2\pi i (N-1)/N h(x)} = M_1(x) \mp i M_2(x)
\end{equation}
}
 
 To summarize, the existence of a height representation indicates that all inclusion and exclusion models with $N>2(p'-1)$ are either critical or ordered. The critical exponents of models in the critical phase must satisfy either Eq.~\ref{eq:lock} or Eq.~\ref{eq:lock2}. This leaves open two questions: whether any models fall within the smooth phase (long-ranged order), and whether there are critical models which do not admit a height representation. 
 
 \section{Phase Diagram}\label{sect:results}
With the height representation as a guide, we now describe in detail the phase diagram of $Z_N$ models (Fig.~\ref{fig:phase}), using the \texttt{reflect} algorithm to determine the critical exponents. Models described by a height representation are classified into $p'$ families, all in the rough phase except for $p'=1$. For small $p'$ the transition to the paramagnetic phase is geometric, coinciding with the point at which defects are allowed, but for large $p'$ and  in the XY limit, we identify a region of the phase diagram in which vortices are energetically allowed but entropically disfavored, leading to quasi-long-ranged order by disorder. The transition from the critical phase to the paramagnetic phase is a Kosterlitz-Thouless-type transition, driven by the unbinding of vortices at $\eta=1/4$.

\subsection{Families of $Z_N$ Models}
\begin{figure}[t]
    \centering
    \includegraphics{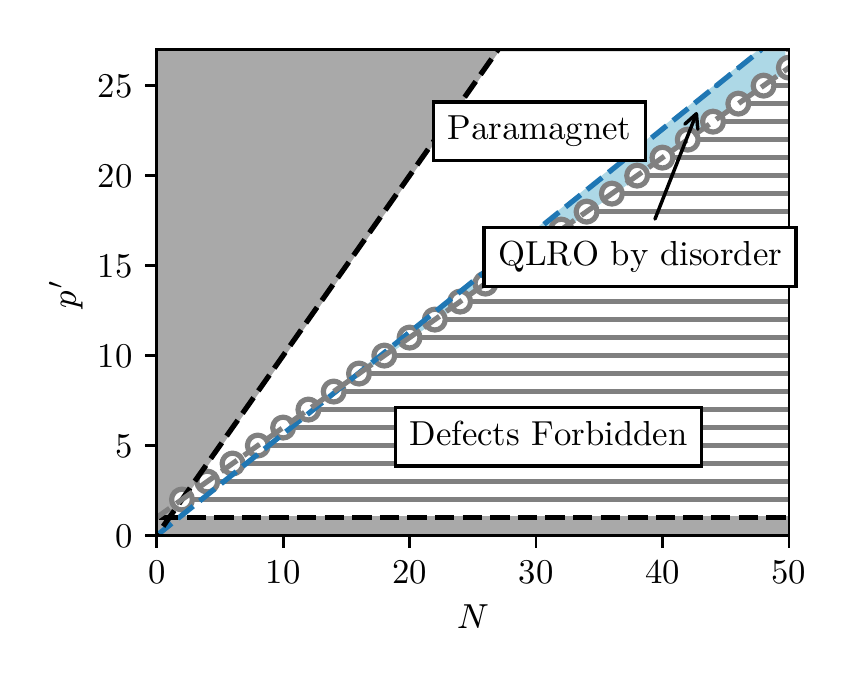}
    \caption{Phase diagram in the $(N,p')$ plane for $Z_N$ inclusion and exclusion models. $p'$ families are represented by gray horizontal lines, with open circles along the gray dashed line denoting $N=N_c(p')$. Dark gray shaded regions are forbidden, and their black dashed boundaries are the trivial limiting cases: $p'=1$ is strictly ordered with zero macroscopic entropy density, while $p'=N$ is disordered (unconstrained/noninteracting). Constant inclusion angle corresponds to functions of the form $p' = N\Delta/\pi$; the blue dashed line has $\Delta=\Delta_c\approx 0.56\pi$ and marks the upper boundary ($\eta=1/4$) of the quasi-long-ranged order by disorder phase (Eq.~\ref{eq:pp2}). \label{fig:phase}}
    \end{figure}
    We begin by assessing trends in the transfer matrix, from which the following definition arises:
\begin{definition}[$p'$ family]\label{def:pp}
A class of models with a common $p'$ and a critical value $N_c(p')$ such that all models with $N > N_c(p')$ have the same maximum transfer matrix eigenvalue for each width.
\end{definition}
Each family has an assortment of scaling properties. On a torus of finite length, the number of configurations can be factored as:
\begin{equation}
\Omega(N,W,L) = N\exp(S(W,L)).
\end{equation}
That is, for a fixed width $W$ and length $L$, $\Omega$ scales proportionally to $N$. Due to the use of longitudinal periodic boundary conditions implicit in Eq.~\ref{eq:f}, this equation only strictly holds for small lengths, since longer strips can support nontrivial windings. 

While Definition~\ref{def:pp} classifies $p'$ families in terms of their eigenvalues in the spin representation, members of the same family are related more fundamentally through their height representation. On the square lattice,
\begin{equation}\label{eq:Nc}
N_c(p') = 2(p'-1).
\end{equation}
Therefore, the condition $N>N_c(p')$ within a $p'$ family---which for inclusion models amounts to forbidding vortices---coincides with to the criterion which allows the height field to be uniquely defined. Defining $\Delta$ to saturate the right-hand-side of Inequality~\ref{eq:delta}, Eq.~\ref{eq:Nc} corresponds to an inclusion angle of $\pi/2$ (gray dashed line in Fig.~\ref{fig:phase}). Below this angle, members of the same $p'$ family admit a height representation obeying the same local height rule (Inequality~\ref{eq:pm1}), and thus the same stiffness.

The value of the stiffness determines whether the family occupies the smooth or rough phase. In the latter phase, Eq.~\ref{eq:K} implies that $\eta N^2$ is invariant. Since $\xi(W) \propto W/\eta$ at criticality on semi-infinite cylinders of width $W$, the correlation length factorizes as:
\begin{equation}\label{eq:corr-pp}
\xi(W, N, p') \approx N^2 \tilde{\xi}(W, p').
\end{equation} 
It must be emphasized that this scaling form is only approximate for finite widths. For, the universal quantity measured in different members of the same family is the correlation function of the height field, $G_h(x-y)$ defined in Eq.~\ref{eq:hcorr}. This can only be related to a correlation in the spin language if the Gaussian action of the form Eq.~\ref{eq:effective} is assumed. In that case, positing exponential decay on the left hand side of Eq.~\ref{eq:gaussian} in the quasi-1D limit immediately yields Eq.~\ref{eq:corr-pp}. But this is a coarse-grained, long-wavelength description of the system, and should not be expected to hold exactly for small widths. %Indeed, as seen in Fig.~\ref{fig:pp} for families with $p'=2,3,4,5$, $\xi/N^2$ at a given width is a monotonically increasing function of $N$, although this scaling correction becomes less severe as $p'$ increases.
Indeed, as seen in Fig.~\ref{fig:pp} for the $p'=2$ family, $\xi/N^2$ at a given width is a monotonically increasing function of $N$, although this scaling correction becomes less severe as $W$ increases.

The linear trend with width exhibited in Fig.~\ref{fig:pp} is indicative of the critical phase. But just as the data collapse between members of the same family improves with $W$, so too does the critical exponent determined from a linear fit to the combined data becomes more accurate as we access larger widths. The largest feasible width is limited by computational constraints, which thus limits the accuracy of our estimate of $\eta$.
\begin{figure}[t]
    \centering
    \includegraphics{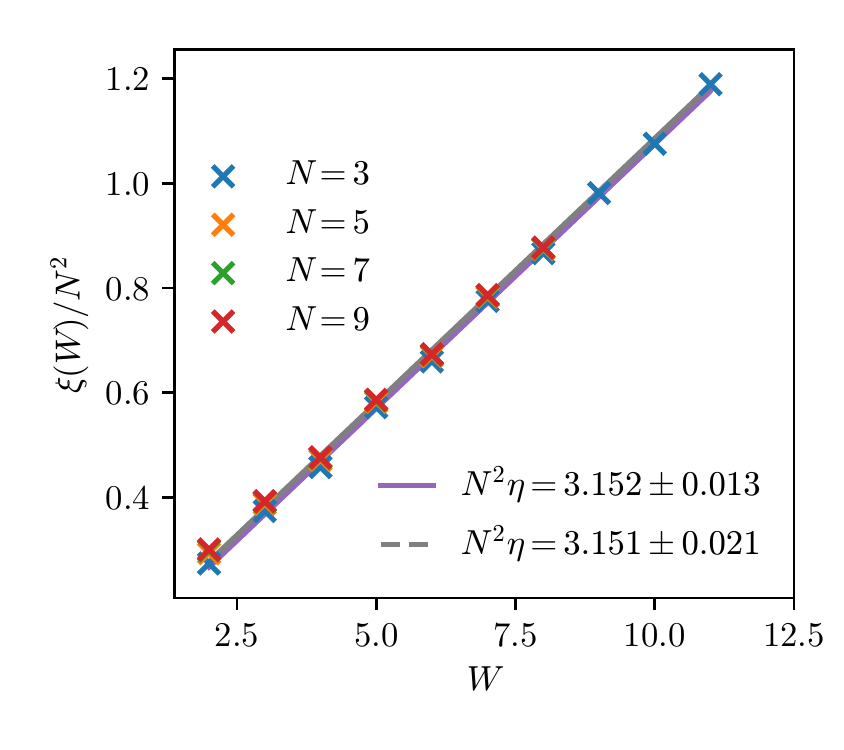}
    \caption{$\xi(W)/N^2$ for the $p'=2$, as determined from the transfer matrix for small values of $N$. Different-colored x's correspond to different members of the family, while the solid (dashed) lines show linear fits for even (odd) widths. Very small widths ($W<4$) were not included in the fits. \label{fig:pp}}
\end{figure}
For the $p'=2$ family, a linear fit of the form $\xi(W)/N^2 = B W + C$ to data points with $W\geq 4$ for even (odd) widths yields $\eta N^2 = 3.152 \pm 0.013$ ($\eta N^2 = 3.152 \pm 0.021$). Alternatively, a fit of the form Eq.~\ref{eq:etainf} yields $\eta N^2 = 2.994 \pm 0.3$ ($\eta N^2 = 2.996 \pm 0.45$). These fits are broadly consistent with the known exponent of $\eta=1/3$ for the staggered susceptibility of the zero-temperature 3-state Potts AFM~\cite{Burton1997}, which belongs to the family. 

This not only provides a useful cross-check of the transfer matrix method, but also points to the power of the height representation: we can immediately bootstrap exact knowledge of the stiffness or approximate determination of $\eta$ for just one member of the family to a theoretical prediction of $\eta$ for \textit{all} members of the family. In this sense the $p'=2$ family (or, put differently, the ground-state ensembles of odd $N$ clock models with standard antiferromagnetic cosine action) can be said to descend from the 3-state Potts AFM at zero temperature, whose critical exponent places the family safely in the rough phase (Eq.~\ref{eq:lock2}, which crucially must be distinguished from Eq.~\ref{eq:lock} for odd $p'$).%For example, the deduced exponent $\eta=0.12$ for $N=5$ fits well into the phase diagram for the frustrated $Z_5$ model, for which it was conjectured that in the regime of antiferromagnetic couplings, there exists a critical phase down to $T=0$ with $\eta \approx 1/8$~\cite{Chatelain2016}.

{We will demonstrate below that all $p'>1$ families occupy the rough phase, and thus the transfer matrix also enables a crude estimate of the central charge $c$ of the associated conformal field theory. For $p'=2$ the central charge can be deduced from the 3-state Potts AFM at zero temperature, for which $c=1$ is known exactly~\cite{Shrock1998}. In fact, through finite size scaling of the entropy density, we find $c\approx 1$ for all families examined. This matches our interpretation of $p'$ families as interface models governed by the effective action in Eq.~\ref{eq:effective}, which if the locking potential is irrelevant is a free boson CFT with $c=1$~\cite{Qualls2015}.} The same central charge was determined numerically for clock models with $5\leq N \leq 8$ obeying the standard action~\cite{Li2020}.
\comment{
Before calculating these critical exponents more precisely using the \texttt{reflect} algorithm, we use the transfer matrix to briefly examine another sense in which this class of exclusion and inclusion models can be viewed as a generalization of Potts models. Instead of families with fixed $p'$, consider models with fixed $q=N/p_{ex}$. The number of configurations on a torus of finite length and width is found to factorize as:
\begin{equation}
\Omega(q, p', W, L) = (p')^{WL} f(W, L)
\end{equation}
The striking feature here is that all dependence on $q$ is absorbed into the definition of $p'$. This equation is motivated by the fact that, starting from an antiferromagnetic configuration, each spin has an average "wiggle room" proportional to $p'$. One can think of this in terms of the ideal states of the height representation (even when defects are allowed and the height field cannot be uniquely defined). So the exponential of the entropy density in the limit $L\rightarrow\infty$ on a torus of width $W$ scales as:
\begin{equation}\label{eq:entropy}
e^{s(W)} = |\Lambda_1|^{1/W} = p' \lim_{L\rightarrow\infty} f(W,L)^{1/WL}
\end{equation}
Data from different $q$ collapse onto the same linear trend as a function of $p'$ \gs{I have a figure to this effect which might not be worth including}. In contrast, the correlation length for a given width decreases monotonically with $q$, and scales roughly as $C-A/p'$.}

\subsection{Kosterlitz-Thouless Transition}
To determine $\eta$ with greater accuracy, the finite size scaling of the susceptibility is measured using the \texttt{reflect} algorithm. We begin by considering inclusion models in the XY limit, as defined by Eq.~\ref{eq:Vij}, and provide evidence of a Kosterlitz-Thouless transition at critical angle $\Delta_c$. {Some results in this section overlap with and confirm the findings of Refs.~\cite{Bietenholz2013,Bietenholz2013a} but are included for completeness, to contextualize our findings on a broader class of models.}

As shown in Fig.~\ref{fig:FSS}, the power-law scaling $\chi \sim L^{2-\eta}$ remains a good fit to the susceptibility, with no obvious trend in the residuals, up to $\Delta \approx 0.57\pi$. For larger inclusion angles, the fit becomes increasingly poor; rather than scaling linearly with $\log L$, $\log\chi$ becomes a concave down function of $L$ indicative of the paramagnetic phase. The persistence of quasi-long-ranged order past the angle at which vortices become allowed ($\Delta=0.5\pi$) lends support to the interpretation that the system becomes a paramagnet when defects become relevant and vortices unbind, the same mechanism which drives the KT transition in the XY model at finite temperature. 

To verify this interpretation and pinpoint the critical angle $\Delta_c$ at which this transition occurs, we consider three key observables near the apparent transition: the critical exponent $\eta$, the second moment correlation length $\xi_{2nd}$, and the Binder cumulant $U$. These are discussed in turn.

\begin{figure}[t]
    \centering
    \includegraphics[width=\linewidth]{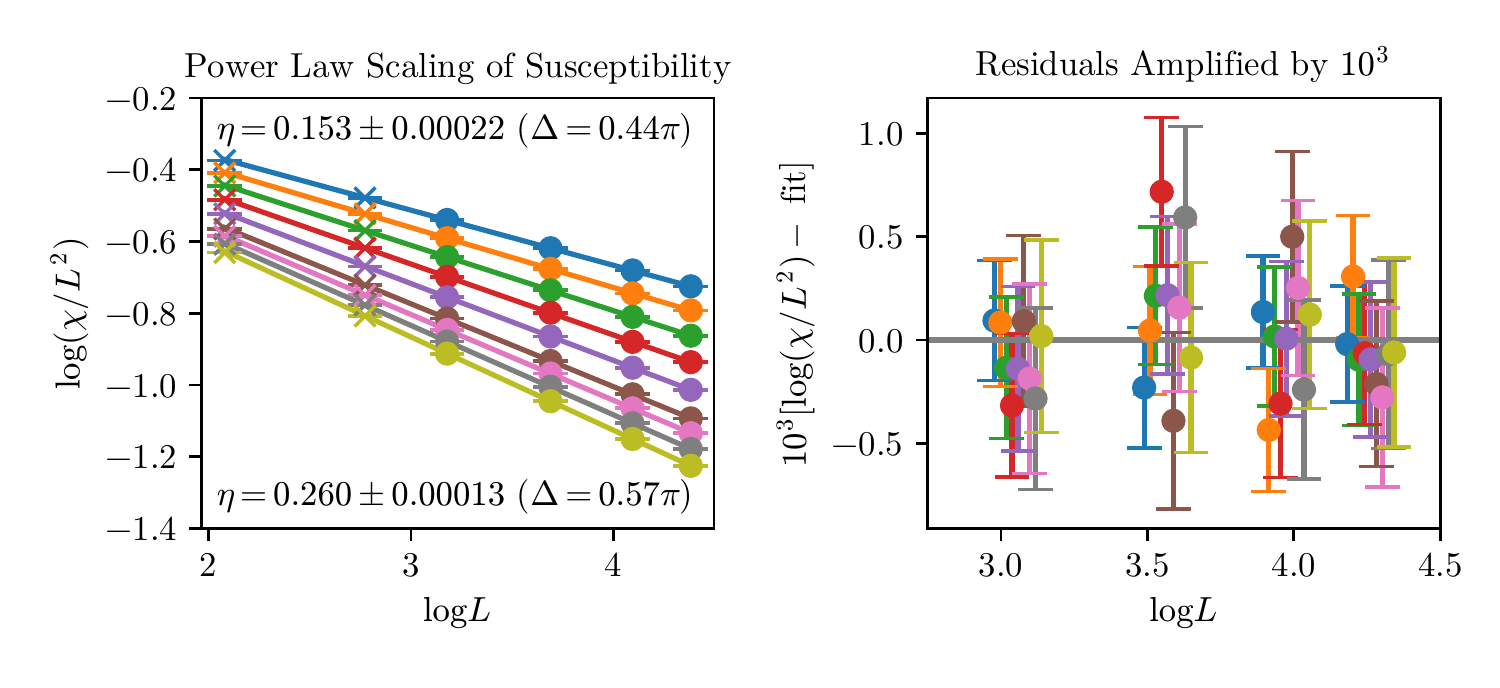}
    \caption{Power-law scaling of the susceptibility. The left panel shows $\log (\chi/L^2)$ as a function of $\log L$ for $L=8,16,24,40,60,80$. Data from $L=8$ and $L= 16$, marked with x's, were not included in the linear fit. From top to bottom, the inclusion angles are $0.44\pi$, $0.46\pi$, $0.48\pi$, $0.5\pi$, $0.52
    \pi$, $0.54\pi$, $0.55\pi$, $0.56\pi$, and $0.57\pi$.  The right panel shows residuals with respect to the fits, amplified by a factor of $10^3$ and with data from different inclusion angles scattered in the horizontal direction to enhance visibility. Error bars are defined as the standard error across 1000 consecutive bunches of 100 MCS each. \label{fig:FSS}}
    \end{figure} 
\subsubsection{Critical exponents $\eta$ and $\eta'$}
Standard KT theory indicates that the most relevant defect operator has $\eta_{vortex}=1/\eta$, so vortex unbinding occurs at $\eta=1/4$. Therefore, if the transition in our class of inclusion models is in the KT universality class, we expect:
    \begin{equation}\label{eq:etaKT}
        \eta(\Delta=\Delta_c) = 1/4.
    \end{equation}

The threshold value of $1/4$ is indicated by the gray line in Fig.~\ref{fig:KTeta}, implying a critical angle just above $\Delta=0.56\pi$. This is roughly consistent with the above finding that the power law scaling of the susceptibility holds up to $\Delta \approx 0.57\pi$.

Fig.~\ref{fig:KTeta} contains two other noteworthy features. First, the exponent $\eta'$ determined from finite size scaling of the average cluster size, closely tracks $\eta$. This is a positive indication of the algorithm's efficiency and is addressed in Appendix ~\ref{sect:app2}.

\begin{figure}[t]
    \centering
    \includegraphics[width=0.85\linewidth]{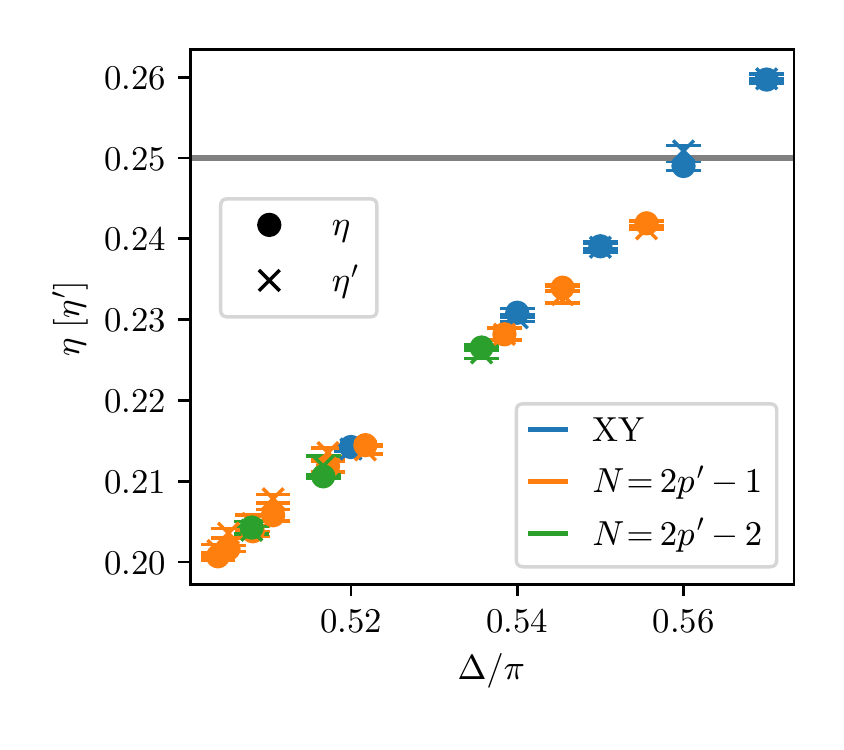}
    \caption{Critical exponents approaching the KT transition. The ray line in the left panel marks the value of $\eta=1/4$ at vortex unbinding. The critical exponents $\eta$, determined from finite size scaling of the susceptibility, and $\eta'$, determined from finite size scaling of the average cluster size, are marked with circles and x's respectively. The plotted error bars show the statistical uncertainty of the log-log fits in Fig.~\ref{fig:FSS}. Different colors denote models in the XY limit (blue), the smallest members of $p'$ families, with $p'$ ranging from 5 to 63 (orange), and clock models just outside the $p'$ families, i.e. $N=N_c$, with $p'$ ranging from 15 to 63 (green).}
    \label{fig:KTeta}
\end{figure}
Second, the critical exponents associated with various clock models, measured by the same method as in Fig.~\ref{fig:FSS}, are plotted on the axes $\eta$ vs. $\Delta/\pi$ by defining $\Delta$ at the midpoint of the interval in Inequality~\ref{eq:delta}, $\Delta=\pi p'/N$. Two sets of models are shown. Marked in orange are models for which $N=2p'-1$, which are the smallest members of their respective $p'$ families and are therefore known rigorously to be critical based on the existence of a height representation. Marked in green are those with $N=N_c(p')=2p'-2$, which lie just outside the $p'$ families. Both sets of models approach an inclusion angle of $0.5\pi$ from above in the limit $N\rightarrow\infty$. Strikingly, even for finite $N$, the two sets of models collapse onto roughly the same trend when plotted vs. $\Delta$, and this trend is also consistent with that of the models in the XY limit. The existence of critical models outside the $p'$ families again points to the phenomenon of quasi-long-ranged order by disorder, as the $N=N_c(p')$ models exhibit power law scaling of the susceptibility (and cluster sizes) despite vortices being allowed. This "QLRO by disorder" regime only exists for sufficiently large $p'$, a statement that will be made quantitative shortly.

\subsubsection{Second moment correlation length}
We now take Eq.~\ref{eq:etaKT} as our \textit{definition} of the critical angle, i.e., $\Delta=\Delta_c$ is the angle at which $\eta=1/4$. If the transition is of a Kosterlitz-Thouless type, then this definition of $\Delta_c$ should be consistent with that estimated via other metrics. One of these is the second moment correlation length, $\xi_{2nd}(L)$, is defined as~\cite{Hasenbusch2005}:
\begin{equation}
\xi_{2nd}(L) = \frac{1}{2\sin(\pi/L)}\sqrt{G(0,0)/G(2\pi/L,0) - 1}.
\end{equation}
This quantity takes its name from the fact that it is the second moment with respect to the Fourier-transformed correlation function, $G(\vec{k})$. In the paramagnet, $\xi_{2nd}(L)$ is independent of system size, whereas in the critical phase, it scales linearly with system size $L$. Thus, the rescaled correlation length, $\xi_{2nd}(L)/L$, is independent of $L$ up to the critical angle, up to subleading corrections at small system sizes. In the thermodynamic limit at the KT transition, this rescaled length takes the value~\cite{Hasenbusch2005}:
\begin{equation}\label{eq:xi2-KT}
    \lim_{L\rightarrow\infty} \frac{1}{L} \xi_{2nd}(L, T = T_{KT}) = 0.7506912.
\end{equation}
For the XY model, $\xi_{2nd}(L)/L$ was found to quickly approach this limit from below~\cite{Komura2012}. 

The hypothesis that our class of models exhibits a KT transition thus yields two predictions: (1) $\xi_{2nd}/L$ should be independent of system size up to $\Delta_c$, and (2) it should take the value given by Eq.~\ref{eq:xi2-KT} at $\Delta=\Delta_c$. (At the system sizes reported in this paper we do not expect this limit to be obtained with great accuracy, but it should fall in the rough neighborhood.) These predictions both approximately hold, as shown in the left panel of Fig.~\ref{fig:KT}. Deviations from $\xi_{2nd}/L\approx$ constant appear above $\Delta\approx 0.57\pi$, and $\xi_{2nd}/L=0.7506912$ between $0.56\pi$ and $0.57\pi$, consistent with the estimate of $\Delta_c$ obtained from Eq.~\ref{eq:etaKT}.

    \begin{figure}[t]
    \centering
    \includegraphics[width=\linewidth]{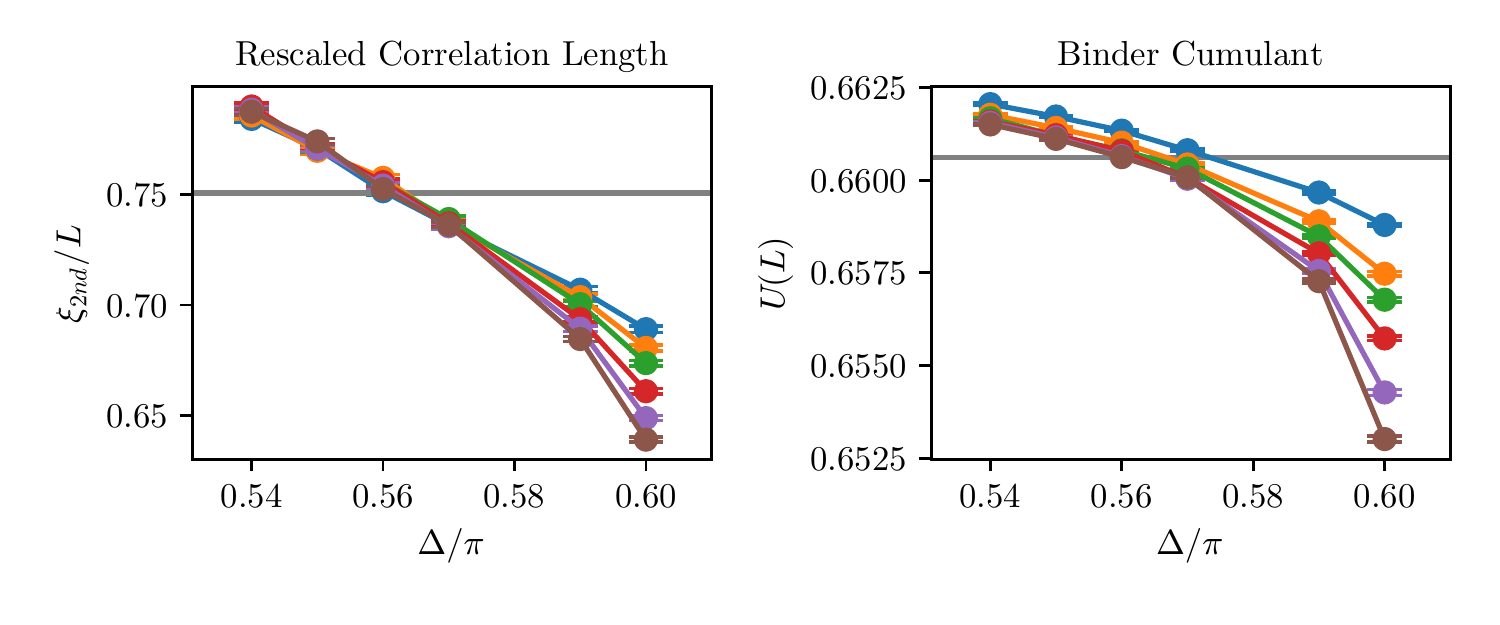}
    \caption{Data collapse of dimensionless quantities, the rescaled second moment correlation length and Binder cumulant, near the KT phase transition. System sizes from top (blue) to bottom (brown) are $L=8$, 16, 24, 40, 60, and 80. The gray line in the left panel marks the value of $\lim_{L\rightarrow\infty}\xi_{2nd}/L=0.7506912$ at vortex unbinding~\cite{Hasenbusch2005}; gray line in the right panel marks the value of the Binder cumulant $U=0.660603$ at the KT transition~\cite{Hasenbusch2008}.\label{fig:KT}}
    \end{figure}
    
\subsubsection{Binder cumulant}
The Binder cumulant $U(L)$, defined as:
\begin{equation}
U(L) = 1 - \langle |\vec{M}|^4 \rangle/3\langle |\vec{M}|^2\rangle ^2
\end{equation}
is, like the rescaled correlation length, asymptotically independent of $L$ within the critical phase. Therefore, for a system which undergoes a KT transition we should observe data collapse between different system sizes up to $\Delta=\Delta_c$, above which $U(L)$ will decay with $L$. At the KT transition point, $U$ takes the value~\cite{Hasenbusch2008}:
\begin{equation}\label{eq:binder-KT}
    \lim_{L\rightarrow\infty} U(L, T=T_{KT}) = 0.660603.
\end{equation}
In the XY model, the approach to this limit is sufficiently rapid ($U(L=32)$ is within 0.02\% of the limiting value) that we expect Eq.~\ref{eq:binder-KT} to be well approximated for the finite system sizes simulated in this study. Our general expectations are confirmed, as shown in the right panel of Fig.~\ref{fig:KT}. Again, $U(L)$ is roughly independent of $L$ within uncertainties up to $\Delta \approx 0.57\pi$, crossing the value given in Eq.~\ref{eq:binder-KT} at $\Delta \approx 0.56\pi$. Interestingly, {as was pointed out in an earlier study of the constraint-only model~\cite{Bietenholz2013a}}, the $L\rightarrow\infty$ limit is approached from above, consistent with the $1/\log L$ scaling corrections of a Gaussian model, but of opposite sign to the scaling corrections measured for the members of the universality class studied in Ref.~\cite{Hasenbusch2008}, including the nearest-neighbor XY and Villain models. In that work, a large $1/(\log L)^2$ correction was attributed to the presence of vortices neglected in the spin-wave theory. While it is not clear where our highly nonlinear model falls on the RG flow to the Gaussian theory, we suspect that the strong suppression of defects which persists even above $\pi/2$ could explain the absence of this vortex-driven correction.

In summary, the various measures discussed in this section---qualitative estimate of the goodness of fit to the susceptibility, the value of $\eta$ determined from this fit, the system size independence of $\xi_{2nd}/L$ and $U(L)$, and their agreement with Eq.~\ref{eq:xi2-KT} and Eq.~\ref{eq:binder-KT} in the vicinity of the transition---collectively lend support to the hypothesis that, in the XY limit, our new class of inclusion models undergoes a Kosterlitz-Thouless transition between $\Delta=0.56\pi$ and $\Delta=0.57\pi$. 

\subsection{Scaling of $\eta$ and QLRO by Disorder}
Recall that the Kosterlitz-Thouless phase is characterized by a line of critical points, with continuously varying exponents. While in the standard case $\eta$ is a function of temperature, in the hardcore spin models $\eta$ is a function of $\Delta$ instead. In this section we find that $\eta(\Delta)$ takes a simple functional form, and relate it to the subtler features of the phase diagram in Fig.~\ref{fig:phase}.

In Fig.~\ref{fig:etascaling} we perform a power-law fit to $\eta(\Delta)$ for inclusion angles ranging from $\Delta=0.05\pi$ to $\Delta=0.56\pi$. Strikingly, a single trend
\begin{equation}\label{eq:scaling}
\eta(\Delta) \sim \Delta^2
\end{equation}
holds for the entire range of $\Delta$. In particular, the functional form of $\eta(\Delta)$ does not appear to change as the system crosses over from the defect-forbidden region ($\Delta < 0.5\pi$) to the "QLRO by disorder" region marked in Fig.~\ref{fig:phaseline}. This is a manifestation of universality in that the critical exponent is not sensitive to whether vortices are strictly forbidden by the hard constraint. This perhaps makes the simplicity of Eq.~\ref{eq:scaling} all the more surprising, as it relates the scaling exponent of a macroscopic observable and the microscopic parameter $\Delta$ in such a straightforward way.
    \begin{figure}[t]
    \centering
    \subfloat[]{
    \centering
    \includegraphics[width=\linewidth]{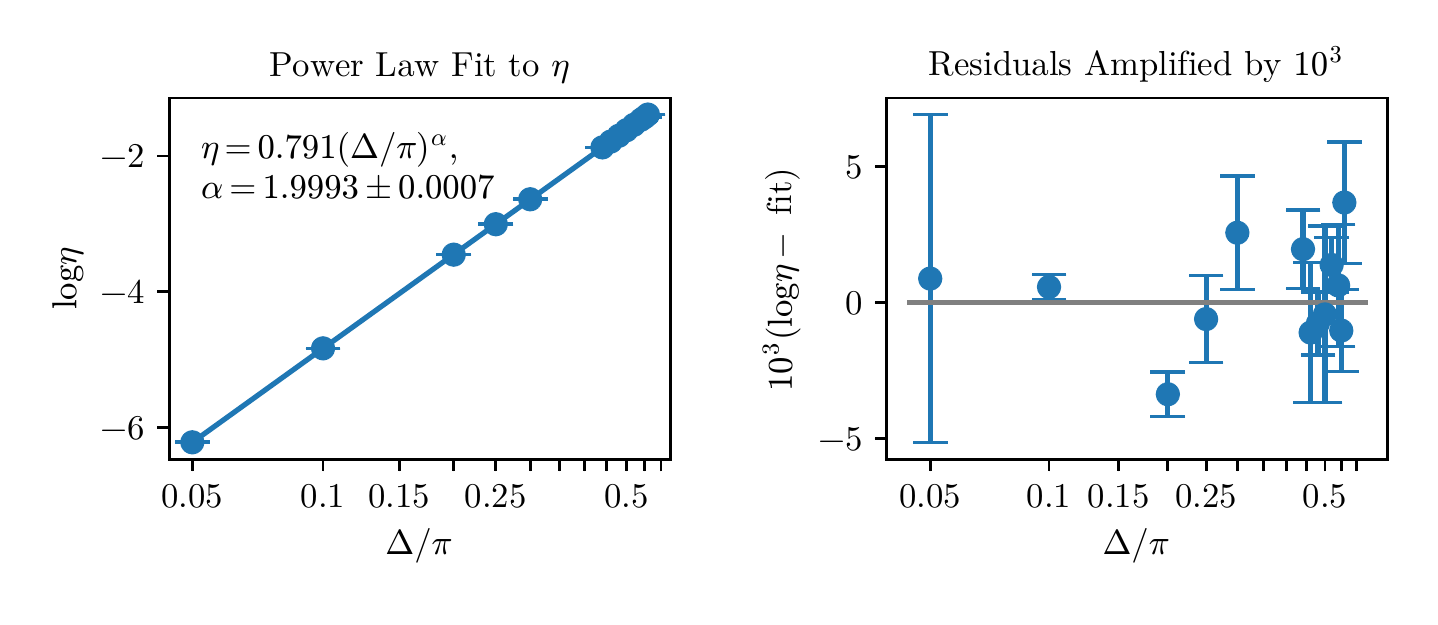}
    \label{fig:etascaling}
    }
    \\
    \subfloat[]{
        \centering
    \includegraphics[width=\linewidth]{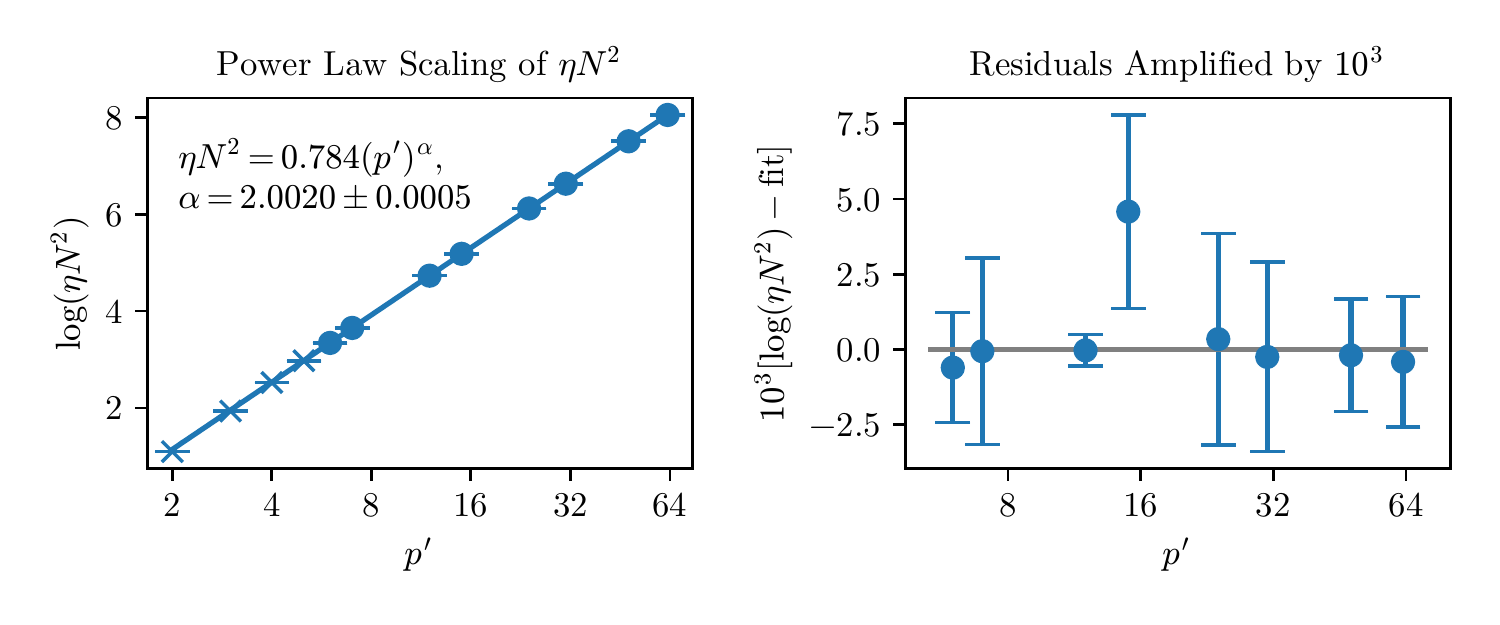}\label{fig:stiffness}
    }
    \caption{Scaling of the critical exponent $\eta$ in the critical phase. (a) Power-law scaling as a function of $\Delta$ in the XY limit. Error bars are derived from the statistical uncertainties in the power-law fits to the susceptibility. The right panel shows residuals with respect to the fit $\eta=0.79\Delta^{1.999}$, amplified by a factor of $10^3$. The $\Delta/\pi$ axis uses a logarithmic scale. (b) Scaling of $\eta N^2$ as a function of $p'$ for $p'$ families of $Z_N$ models. Error bars are statistical uncertainties from power-law fits to the susceptibility of one model within the $p'$ family (except $p'=2$ where the theoretical value of 3 is used). The $p'$ axis uses a logarithmic scale. The right panel shows the residuals with respect to the fit $\eta N^2 = 0.784p'^2$, starting at $p'=6$.}
    \end{figure}
Assuming that Eq.~\ref{eq:scaling} holds yields a more precise estimate of the critical angle $\Delta_c$, by solving for the angle at which $\eta=1/4$. A single parameter fit fixes the constant of proportionality and implies
\begin{equation}\label{eq:delta-c}
\Delta_c = (0.56192 \pm 0.00014)\pi.
\end{equation}
It must be emphasized that the error bars here only capture the statistical uncertainty of the fit to Eq.~\ref{eq:scaling} and not any of the uncertainties associated with the limited number of independent samples, logarithmic corrections to scaling for small system sizes, and so on. {In particular, taking logarithmic corrections into account in the fit to susceptibility for system sizes sizes up to $L=4096$ yielded a value for $\Delta_c$ that is $0.5\%$ greater than our estimate~\cite{Bietenholz2013a}. Nevertheless, it is remarkable that we could come this close to the precision determination of $\Delta_c$ from such a simple ansatz for $\eta(\Delta)$.}

Returning to clock models, in Fig.~\ref{fig:stiffness} we fit $\eta N^2$, the invariant quantity within $p'$ families, to a power law as a function of $p'$ and find:
\begin{equation}\label{eq:pp2}
    \eta N^2 (p') = 0.784 (p')^\alpha, \qquad \alpha = 2.0020 \pm 0.0005
\end{equation}
for a fit beginning at $p'=6$.
This is equivalent to Eq.~\ref{eq:scaling} once we make the identification $\Delta=\pi p' /N$. This equivalence was partly anticipated by Fig.~\ref{fig:KTeta}, in which clock models, both inside and outside their respective $p'$ families, followed a similar trend in $\eta$ as models in the XY limit near an inclusion angle of $0.5\pi$. But given the presence of the symmetry-breaking $h_N$ interaction which forces the spins to be discretized, it is not obvious beforehand that models of $Z_N$ and XY spins will exhibit the same scaling once we have mapped the parameters $(N,p')\leftrightarrow \Delta$. Thus, a few explanatory remarks are in order.

First, the monotonic increase of $\eta N^2$ as a function of $p'$ implies the absence of a symmetry-broken (smooth) phase in the phase diagram in Fig.~\ref{fig:phase}. (We exclude $p'=1$, which is trivially ferromagnetic or antiferromagnetic and maps to an interface of constant height.) If there existed a region of $(N,p')$ parameter space that possessed long-ranged order, it would overlap with the defect-forbidden region, and thus we could identify some $p'$ family with $\eta_{lock} < 4$. For odd $p'$, this would imply (Eq.~\ref{eq:lock}) $\eta N^2 < 4$, while for even $p'$, this would imply (Eq.~\ref{eq:lock2}) $\eta N^2 < 1$. Since small $p'$ families are in the rough phase, as determined both from finite size scaling of the transfer matrix and from the \texttt{reflect} algorithm, the monotonicity of $\eta N^2 (p')$ implies that \textit{all} $p'$ families are in the rough phase. This is a peculiar result, as nothing in our class of models forbids a smooth phase \textit{a priori}. But the fact that the locking potential becomes strongly irrelevant as $p'$ increases is also internally consistent with the observation that, for $p'$ of order 10 and above, there is no real distinction in terms of critical exponents between $Z_N$ models and XY models of commensurate inclusion angle. {This is reminiscent a known feature of the phase diagram of $Z_N$ models with standard action: while the $Z_5$ model exhibits a lower transition temperature between the QLRO and paramagnetic phase as well as stronger finite-size effects in the partition function zeros and helicity modulus, these quantities are roughly independent of $N$ for $N\geq 6$, collapsing onto the XY limit~\cite{Kim2017,Li2020}.}

 Having ruled out a smooth phase, there remain two other potential phases for clock models: the critical phase and the paramagnetic phase. As indicated by the unshaded region in Fig.~\ref{fig:phase} for a fixed $N\geq 4$ there exists a sufficiently large $p'<N$ such that the corresponding $(N,p')$ model is nontrivially paramagnetic. This phase includes, for example, the $q$-state Potts antiferromagnets, which are disordered for $q\geq 4$ on the square lattice~\cite{Salas1998}. This leaves open the question of whether all models outside the $p'$ families are disordered, or whether there exists a regime, as in the XY limit, in which defects are allowed but irrelevant. 

For small $p'$, there is no such regime, a feature that can be understood via the height representation. A defect in the height field of winding number 1 has $\Delta h = N$, which, when substituted into the effective free energy in Eq.~\ref{eq:effective}, implies that the most relevant vortex operator has critical exponent~\cite{Cardy}:
\begin{equation}
    \eta_{vortex} = \frac{N^2 K}{2\pi} = \frac{1}{\eta}.
\end{equation}
Within a $p'$ family, if $\eta>1/4$ for some $N$, this implies that the vortex operator is relevant, and quasi-long-ranged order is preserved only because the strength of the operator is identically zero. In that case, the transition from QLRO to the paramagnet as a function of $N$ for fixed $p'$ is not Kosterlitz-Thouless in nature but geometric, occurring exactly at $N=N_c(p')$ when defects become allowed. This is the case, for example, for $p'=2$, as the smallest member of the family, the 3-state Potts AFM, has $\eta=1/3$. Within the range of $p'$ exhibiting this geometric transition, the corresponding value of $\eta N^2$ also deviates slightly from the trend in Eq.~\ref{eq:pp2}. Including the data points with $p'<6$ leads to a greater reduced $\chi^2$ value and a noticeable trend in the residuals indicative of deviations from a power law, although it does not significantly change the predicted exponent. These deviations from the trend possibly reflect the greater influence of the locking potential at low $p'$, as well as the fundamental contrast between the geometric nature of the transition for low $p'$ and the KT transition for XY spins.

As $p'$ increases, however, the effective inclusion angle of the smallest member of the $p'$ family decreases, and the associated critical exponent decreases. Evaluating Eq.~\ref{eq:pp2} at $N=N_c(p')+1$ indicates that for $p'\geq 5$, all members of the corresponding $p'$ family have $\eta<1/4$. This allows for critical models outside the family. The minimum value of $p'$ for which this QLRO by disorder phase exists can be estimated by positing that the critical exponent in the vortex-allowed critical region obeys the same scaling form as in the vortex-forbidden region, and evaluating Eq.~\ref{eq:pp2} at $N=N_c(p')$:
\begin{equation}
    \eta(N_c, p') = 0.784(p')^2/(2p'-2)^2
\end{equation}
which implies a crossover at $p'\approx 8.7$. Since $p'$ only takes integer values, this indicates that the $(N,p')=(16,9)$ model is the smallest $p'$ critical model not described by a height representation. 

\section{Conclusions}\label{sect:conclude}
It is a marvel of statistical mechanics that systems with drastically different microscopic physics can exhibit the same macroscopic behavior near critical points. The class of "constraint-only" spin models examined in this paper serve as an extreme example of this phenomenon. The divergent nature of the potential means that temperature is not well-defined, and the ordering and phase transitions are driven solely by entropy. It is not immediately obvious that these phenomena, examined here for models of $Z_N$ and XY spins on the square lattice, will fall into the same universality classes as their finite-temperature cousins and yet, when the dust settles, they do. That said, there are many interesting details, most notably the existence of a vortex free and yet non-linear ``spin wave'' regime and that the long-range ordered phases that {\it can} exist for $Z_N$ models do not arise in our models. We have investigated our models using height representations, transfer matrix calculations, and an ergodic cluster algorithm, arriving at the phase diagram shown in Fig.~\ref{fig:phase}.

Our focus in this paper has been on the equilibrium properties of this class of spin models. Yet the simplified nature of the interaction, constructed by analogy to that of hard spheres/disks, also imbues the spin models with interesting dynamics.  This dynamics will be addressed in a following paper. Indeed as we were finishing this work, we came across Ref.~\cite{Yoshino2018} which has applied ``replica symmetry breaking'' techniques to related models on random graphs, and it will also be instructive to make contact with this line of work.  

\section*{Acknowledgments}
We would like to thank Sal Torquato, Frank Stillinger, Michael Aizenman, Ron Peled, and Romain Vasseur for useful discussions. We also would like to thank an anonymous referee for bringing to our attention earlier works from lattice field theory that cover similar ground, particularly Ref.~\cite{Bietenholz2013a}. GMS is supported by the Department of Defense (DoD) through the National Defense Science \& Engineering Graduate (NDSEG) Fellowship Program and SLS would like to acknowledge the support of the Department  of  Energy  via  grant  No.   DE-SC0016244. Additional support was provided by the Gordon and Betty Moore Foundation through Grant GBMF8685 towards the Princeton theory program and the use of the TIGRESS High Performance Computer Center at Princeton University.
This work was in part supported by the Deutsche Forschungsgemeinschaft under grants SFB 1143 (project-id 247310070) and the cluster of excellence ct.qmat (EXC 2147, project-id 390858490). 
\appendix
\section{Ergodicity of the \texttt{Reflect} Algorithm}\label{sect:app1}
In this section we prove that Algorithm~\ref{alg:cluster} is ergodic for inclusion models on any lattice. Furthermore, ergodicity of exclusion models on bipartite lattices is proven. While this has already been proven elsewhere~\cite{Hasenbusch1996}, the alternate proof provided here will also serve to introduce the notation employed in the following Appendix, for a closer study of the algorithm's properties.

Since each move is reversible, in proving ergodicity it suffices to prove that a chosen target configuration can be attained from any initial state via a finite sequence of cluster moves~\cite{Burton1997, Moore2000}.
\begin{theorem}\label{theor:inclusion}
Let $T$ be a ferromagnetic target configuration on a lattice of $N_s$ spins, with $\theta=0$ for all spins. Given any initial configuration $A$ which respects the inclusion constraint, $A$ can be transformed to $T$ by a series of $\leq N_s$ iterations of the \texttt{reflect} algorithm.
\end{theorem}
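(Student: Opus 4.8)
The plan is to use the reversibility of every \texttt{reflect} move, noted just above the statement, so that it suffices to exhibit, for an arbitrary valid inclusion configuration $A$, a sequence of at most $N_s$ moves carrying it to the ferromagnet $T$. I would first record two structural facts the construction rests on: a single move sends valid configurations to valid configurations (every bond incident to a reflected spin is re-examined, and a neighbor is left untouched precisely when the bond is already satisfied), and a reflection about a fixed axis preserves all pairwise angles within the pocket. The strategy is then to arrange that the number of spins sitting at angle $0$ strictly increases with each move while no spin already at $0$ is ever dislodged; since there are $N_s$ spins, this terminates in at most $N_s$ moves.

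For the move itself, if $A\neq T$ I would take as the seed a non-zero spin $v$ on the boundary of the current zero-region (any spin suffices for the first move) and set the reflection axis to $\theta_\ast=\theta_v/2$, so that applying \texttt{reflect} sends $v$ to angle $0$. The algorithm then deterministically enlarges the pocket to any neighbor whose bond is violated after reflection and reflects it in turn. In the worked examples one checks that the pocket spreads only into the non-aligned region and that each spin it visits is carried strictly toward $0$, so that $v$ becomes a new zero while the previously aligned spins are left in place and the count of aligned spins rises by one.

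The crux, and what I expect to be the main obstacle, is the zero-preservation lemma: no spin currently at angle $0$ is ever added to the pocket. This is delicate because the pocket can propagate through the non-aligned spins and, on a lattice containing cycles, potentially re-enter the aligned region from the far side; were a reflected spin adjacent to a zero-spin to land at angular distance $\geq\Delta$ from $0$, it would drag that zero-spin in and spoil the monovariant. I would control this using the inclusion property together with the seed choice: because $A$ is valid, every neighbor of a zero-spin already lies strictly inside the admitted window about $0$, and since reflection about $\theta_\ast$ preserves angles, I would argue inductively along the growth of the pocket that any reflected spin bordering the aligned region remains inside that window, so that no zero-spin is ever flagged. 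Proving this containment, rather than the bookkeeping that each move then adds exactly one zero, is where the real work lies and is presumably where the lattice structure enters; the exclusion case on bipartite lattices would follow afterward by composing with the sublattice rotation by $\pi$ that maps exclusion ground states to inclusion ones.
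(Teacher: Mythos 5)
Your overall strategy coincides with the paper's: reduce via reversibility, use the monovariant that each move enlarges the set $S$ of spins already at $0$ without disturbing it, and seed each move with a misaligned spin $v$ reflected about the axis $\theta_v/2$. You have also correctly isolated the crux: no spin of $S$ may ever be flagged into the pocket. But the argument you sketch for that crux does not close, and the containment claim it rests on is false as stated. The step ``every neighbor of a zero-spin lies strictly inside the window about $0$, and since reflection preserves angles, any reflected spin bordering the aligned region remains inside that window'' fails on two counts. First, reflection preserves angles only between pairs of spins that are \emph{both} reflected; it does not preserve the angle between a reflected spin and an unreflected zero-spin, which is the pair that matters here. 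Second, being inside the window before reflection does not imply staying inside it after: take $\Delta = 0.55\pi$, seed $\theta_v = \pi$ (axis $\theta_* = \pi/2$), and a neighbor of a zero-spin at $\varphi_u = -0.4\pi$. This is valid ($0.4\pi < \Delta$), yet its reflection lands at $\pi - \varphi_u = 1.4\pi$, at angular distance $0.6\pi > \Delta$ from $0$, so reflecting it would drag the zero-spin into the pocket. Hence ``lies inside the window'' is not an inductively self-sustaining invariant, and your induction has nothing to propagate through the bulk of the pocket, which grows mostly through spins that do not border $S$ at all.

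What actually saves the lemma---and what is missing from your proposal---is an invariant about which \emph{side of the reflection axis} a pocket spin lies on. Writing each spin as $s_i = \exp(i\sigma_i\phi_i)$ in the frame of the axis, with $\sigma_i = \pm 1$ and $\phi_i \in [0,\pi]$, the paper's Proposition 1 compares the validity condition before the flip, $\cos\Delta < \cos\phi_i\cos\phi_j + \sigma_i\sigma_j\sin\phi_i\sin\phi_j$, with the violation condition after, $\cos\Delta > \cos\phi_i\cos\phi_j - \sigma_i\sigma_j\sin\phi_i\sin\phi_j$, and concludes they can hold simultaneously only if $\sigma_i = \sigma_j$: reflecting a spin toward a neighbor on the \emph{opposite} side of the axis can only decrease their separation. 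The correct inductive invariant is therefore that every spin ever added to the pocket carries the seed's sign $\sigma = +1$, which propagates along every chain of violations; since all of $S$ sits at $-\theta_*$ and has $\sigma = -1$, one gets $C \cap S = \emptyset$ outright, and your window-containment becomes a corollary rather than the engine of the induction (your problematic spin $\varphi_u = -0.4\pi$ has $\sigma_u = -1$ and simply can never enter the pocket). Separately, your closing remark that exclusion on bipartite lattices ``follows by composing with the sublattice rotation by $\pi$'' only works when that rotation is a symmetry of the state space (XY limit or even $N$); for odd-$N$ clock exclusion models---precisely the case the paper emphasizes as lacking an inclusion counterpart---the paper instead reruns the argument with a staggered target and the opposite-sign version of the proposition, which your reduction would not cover.
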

\begin{proof}
Let $S$ denote the (possibly empty) domain of spins on which $A$ agrees with $T$. To transform $A$ into $T$, we successively add spins to this domain, with each move adding $\geq 1$ spin to the domain and leaving the spins in $S$ unchanged.

A cluster move starts by adding a random spin $s$ to the pocket $\mathcal P$. Without loss of generality, let $s=\exp(2i\theta)$, where $0<\theta<\pi$. A reflection about the axis $\theta$ will align $s$ with the ferromagnetic domain, that is, it will add $s$ to $S$. If $C$ denotes the set of all spins that participate in the cluster move, i.e., all spins that are at some point added to the pocket $\mathcal P$, then it suffices to prove that $C \cap S = \emptyset$. 

For convenience, we redefine our angles with respect to the axis $\theta$, so that $s=\exp(i\theta)$, and all spins in $S$ are aligned with $-\theta$. For reasons that will become clear in the next section, each spin on the lattice can be written in the form $s_i = \exp(i \sigma_i \phi_i)$, where $\sigma_i = \pm 1$ is an Ising variable, and  $\phi_i \in [0, \pi]$. Then, the proposed reflection flips the Ising variable while leaving $\phi$ unchanged. 

A spin $s_j$ is added to the pocket if and only if one of its neighbors $s_i$ is reflected and if, after the reflection, the pair violates the inclusion constraint. This leads to the following proposition:
\begin{prop}\label{prop:same}
Consider two neighboring spins $s_i=\exp(i\sigma_i \phi_i)$, $s_j = \exp(i\sigma_j \phi_j)$, where $\phi \in [0, \pi]$, $\sigma = \pm 1$. If, after reflecting $s_i$ about the axis $\theta=0$, i.e. $s_i\rightarrow s_i^*$, the pair $(s_i^*, s_j)$ violates the inclusion constraint, then $\sigma_i = \sigma_j$.
\end{prop} 
For, since each move starts from a valid configuration, $s_i$ and $s_j$ must initially enclose an angle less than $\Delta$:
\begin{equation}\label{eq:cosd1}
\cos\Delta < \cos(\sigma_j \phi_j - \sigma_i\phi_i) = \cos\phi_i\cos\phi_j + \sigma_i \sigma_j \sin\phi_i\sin\phi_j
\end{equation}
while, after reflecting the spin $s_i$, the inclusion constraint will only be violated if the inner product satisfies:
\begin{equation}\label{eq:cosd2}
\cos\Delta > \cos(\sigma_j \phi_j + \sigma_i\phi_i) =
\cos\phi_i\cos\phi_j - \sigma_i \sigma_j \sin\phi_i\sin\phi_j.
\end{equation}
Inequalities~\ref{eq:cosd1} and~\ref{eq:cosd2} can only hold simultaneously if $\sigma_i = \sigma_j$.

Every spin in the set $C$ can be viewed as the endpoint of a directed path of spins  $s_1 \rightarrow s_2 ...\rightarrow s_n$ where $s_{i}$ gets reflected, which then violates the hard constraint with its neighbor $s_{i+1}$, which is then added to the pocket, and so on. Since the cluster move starts by transforming $s$, every such path can be traced back to $s$, and since all spins on the path have the same Ising variable due to Proposition~\ref{prop:same}, it follows that every spin in $C$ has $\sigma=1$. Recalling our convention that all spins in the ferromagnetic domain $S$ have a negative Ising variable, Theorem~\ref{theor:inclusion} immediately follows.
\end{proof}
The same proof applies for inclusion $Z_N$ models, the only difference being that the set of initial allowed configurations is restricted to discretized spin states. 

Theorem~\ref{theor:inclusion} is independent of the choice of lattice. On bipartite lattices, the algorithm is also ergodic for exclusion models, including those which lack a direct mapping to inclusion models, namely, odd $N$ clock models. In this case, the target configuration consists of $\theta = 0$ on the A sublattice and $\theta=\pi (N-1)/N$ on the B sublattice, approaching an antiferromagnet in the XY limit. Using the same definitions as before, with angles defined with respect to the axis of reflection, we arrive at the following corollary to Prop.~\ref{prop:same}:
\begin{prop}\label{prop:opp}
Consider two neighboring spins $s_i=\exp(i\sigma_i \phi_i)$, $s_j = \exp(i\sigma_j \phi_j)$, where $\phi \in [0, \pi]$, $\sigma = \pm 1$. If, after reflecting $s_i$ about the axis $\theta=0$, i.e. $s_i\rightarrow s_i^*$, the pair $(s_i^*, s_j)$ violates the exclusion constraint, then $\sigma_i = - \sigma_j$.
\end{prop}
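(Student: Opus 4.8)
The plan is to recycle the inclusion argument of Proposition~\ref{prop:same} almost verbatim, flipping every inequality, since the exclusion constraint is precisely the logical negation of the inclusion one. First I would record the two facts that pin down the situation. Reflection of $s_i$ about the axis $\theta=0$ acts as $\sigma_i \to -\sigma_i$ while leaving $\phi_i$ untouched, so the only quantity that changes is the enclosed angle, whose cosine is $\cos(\sigma_j\phi_j - \sigma_i\phi_i)$ before and $\cos(\sigma_j\phi_j + \sigma_i\phi_i)$ after. Using the addition formulas together with the parity of sine and cosine, these read $\cos\phi_i\cos\phi_j \pm \sigma_i\sigma_j\sin\phi_i\sin\phi_j$. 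Since the move starts from a valid exclusion configuration, the initial pair encloses an angle at least $\Delta$, and because cosine decreases on $[0,\pi]$ this becomes
\begin{equation}
\cos\phi_i\cos\phi_j + \sigma_i\sigma_j\sin\phi_i\sin\phi_j \le \cos\Delta .
\end{equation}
The hypothesis that the reflected pair \emph{violates} the exclusion constraint means the new angle is strictly below $\Delta$, i.e.
\begin{equation}
\cos\phi_i\cos\phi_j - \sigma_i\sigma_j\sin\phi_i\sin\phi_j > \cos\Delta .
\end{equation}

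Second, I would chain these two displays through the common value $\cos\Delta$ and cancel the $\cos\phi_i\cos\phi_j$ terms, which leaves $\sigma_i\sigma_j\sin\phi_i\sin\phi_j < 0$. Because $\phi_i,\phi_j\in[0,\pi]$ forces $\sin\phi_i,\sin\phi_j\ge 0$, the only way this strict inequality can hold is $\sigma_i\sigma_j=-1$, that is $\sigma_i=-\sigma_j$, which is the claim. This is exactly the mirror of the inclusion computation, where the analogous chaining produced $\sigma_i\sigma_j\sin\phi_i\sin\phi_j>0$ and hence $\sigma_i=\sigma_j$.

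The one point deserving care---more a bookkeeping check than a genuine obstacle---is the degenerate case $\sin\phi_i\sin\phi_j=0$, in which a spin sits on the reflection axis ($\phi=0$) or antipodal to it ($\phi=\pi$). There reflection fixes $s_i$, the enclosed angle is unchanged, and a valid pair cannot become a violating one, so the hypothesis of the proposition is vacuous and there is nothing to prove; this is also why the strict inequality above is consistent only in the nondegenerate regime $\sin\phi_i\sin\phi_j>0$. I would then note, as in Theorem~\ref{theor:inclusion}, that Proposition~\ref{prop:opp} is the engine of the exclusion ergodicity argument: every spin added to the pocket lies on a directed chain issuing from the seed spin, and the proposition forces $\sigma$ to alternate along each such chain. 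On a bipartite lattice an alternating $\sigma$ is constant on each sublattice, so the two sublattices carry opposite, fixed signs matching the structure of the antiferromagnetic target, and the cluster $C$ leaves the already-matched domain untouched---which is precisely what is needed to complete the exclusion analog of Theorem~\ref{theor:inclusion}.
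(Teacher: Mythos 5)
Your proof is correct and follows essentially the same route as the paper, which presents Proposition~\ref{prop:opp} as a direct corollary of Proposition~\ref{prop:same}: the two cosine inequalities are simply reversed (a valid exclusion pair gives $\cos\phi_i\cos\phi_j + \sigma_i\sigma_j\sin\phi_i\sin\phi_j \le \cos\Delta$, while a violation after reflection gives the opposite strict inequality), and chaining them forces $\sigma_i\sigma_j < 0$, i.e.\ $\sigma_i = -\sigma_j$. Your extra bookkeeping---the vacuous degenerate case $\sin\phi_i\sin\phi_j = 0$ and the alternation of $\sigma$ along chains, which on a bipartite lattice fixes opposite signs on the two sublattices and yields $C \cap S = \emptyset$---matches what the paper states but leaves implicit.
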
 
This implies that $C$ contains only $\sigma=1$ ($\sigma=-1$) on the A (B) sublattice, whereas $S$ contains only $\sigma=-1$ ($\sigma=1$) on the A (B) sublattice. Once again, $C \cap S = \emptyset$, so the target configuration can be attained in $\leq N_s$ moves.
\section{Mapping to Random Cluster Model}\label{sect:app2}
    \begin{figure}[hbtp]
    \centering
    \includegraphics[width=\linewidth]{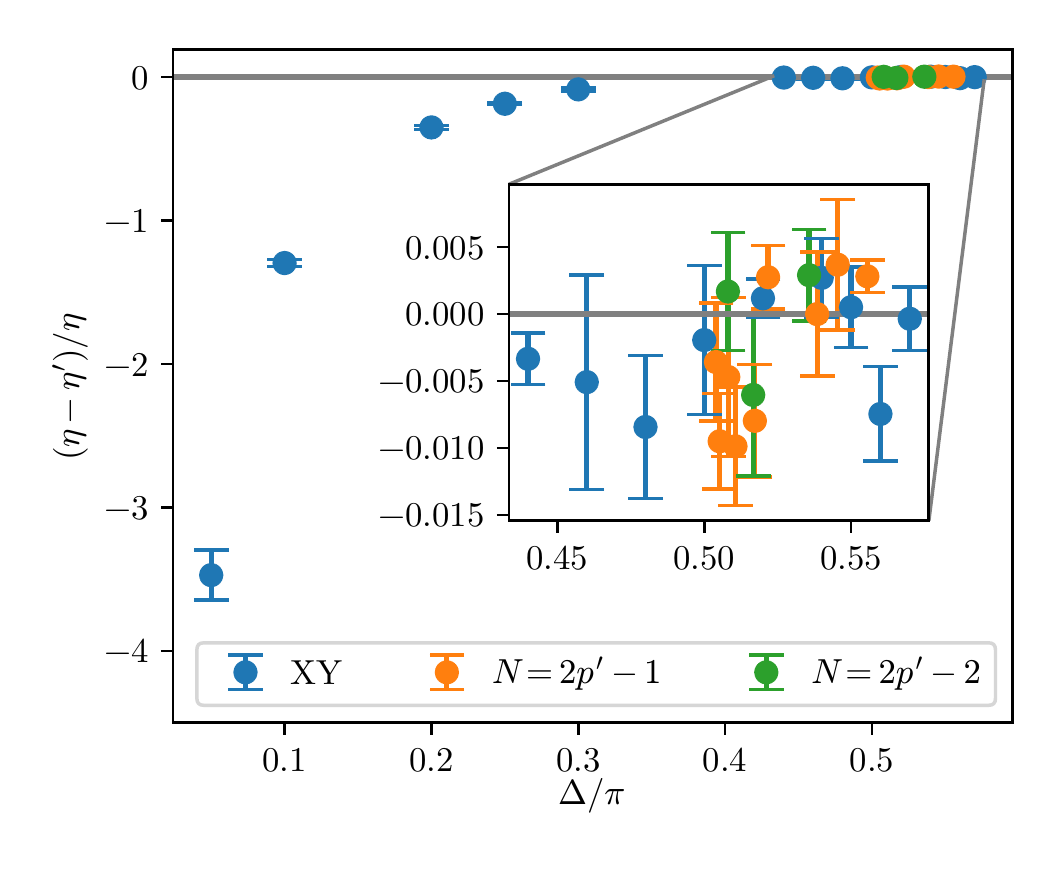}
    \caption{Discrepancy between $\eta$ and $\eta'$ as a function of inclusion angle. Vertical axis is the relative error $(\eta-\eta')/\eta$, with error bars estimated from the statistical uncertainty of the power law fit used in finite size scaling. Data from clock models were plotted by defining $\Delta = \pi p'/N$.\label{fig:eta-v-etap}} 
    \end{figure}
\subsection{Fortuin-Kasteleyn Mapping}
An appealing feature of the \texttt{reflect} algorithm is that the average cluster size near the transition scales with the same exponent, within error bars, as the susceptibility, as shown in Fig.~\ref{fig:eta-v-etap}. This property is observed for "good" cluster algorithms, such as the Wolff algorithm for Ising and Potts models~\cite{Wolff1989} and a generalized geometric cluster algorithm used to study lattice gases and Ising models at constant magnetization~\cite{Heringa1998,Heringa1998a}. Conversely, the pivot cluster algorithm is inefficient for systems of hard disks, and even more so for hard rods, near the transition because the percolation threshold and transition density do not match~\cite{Krauth2003}. Thus, agreement between $\eta$ and $\eta'$ is a measure of the algorithm's efficiency, indicating the extent to which the clusters produced by the algorithm are the "physical" clusters mediating the transition~\cite{DOnorioDeMeo1990}.

The scaling of cluster sizes in the Wolff algorithm follows from an exact mapping between the Ising Model and the Fortuin-Kasteleyn random cluster model~\cite{Fortuin1972a,Fortuin1972b, Fortuin1972}. This equivalence is constructed from a joint probability distribution over Ising spins $\{\sigma_i\}$, which live on the sites of the lattice, and bond variables $\{b_{ij}\}$ defined on the nearest-neighbor links. Each bond is either occupied ($b_{ij}=1$) or empty ($b_{ij}=0$). A given bond configuration decomposes the lattice into clusters, where two sites $(i, j)$ are said to be in the same cluster if and only if they are connected by a path of occupied bonds, denoted $i\leftrightarrow j$. The Swendsen-Wang algorithm uses this bond configuration to generate the next spin state by independently flipping each cluster with probability $1/2$~\cite{Swendsen1987}. On the other hand, the Wolff algorithm selects just one of these clusters and flips it with probability 1; this single cluster is constructed by choosing a random site and adding neighbors to the cluster with the conditional probability $p(b_{ij}=1|\{\sigma\})$. This leads to two questions: (1) Can the scaling of clusters in the \texttt{reflect} algorithm likewise be explained through an exact mapping to a bond model? (2) If such a mapping exists, what drives the systematic discrepancy between $\eta$ and $\eta'$ at lower inclusion angles seen in Fig.~\ref{fig:eta-v-etap}?

As a first step toward such a mapping, we define an Ising variable ("pseudospin") on each site and relate the pseudospin correlation function to the correlation function of the original spins, following a similar procedure to that in Refs.~\cite{Patrascioiu1993,Aizenman1994,Patrascioiu2002}. As in Appendix~\ref{sect:app1}, we let $\sigma=\mathrm{sgn}(\varphi)$, where $\varphi\in [-\pi, \pi]$ is the angle with respect to the axis of reflection $\theta$. Then, expressing the angle $\varphi = \sigma \phi$ where the "auxiliary spin" $\phi$ is in the interval $[0,\pi]$, a reflection about axis $\theta$ flips the Ising variable while leaving the auxiliary spin unchanged. The inclusion models considered in this paper belong to a class of ferromagnetic Hamiltonians whose pair potential $V(\varphi_i, \varphi_j)$ obeys the inequality:
\begin{equation}
    V(\phi_i, \phi_j) = V(-\phi_j, -\phi_j) \leq V(\phi_i, -\phi_j) = V(-\phi_i, \phi_j).
\end{equation}
That is, given a choice of auxiliary spins $\phi_i, \phi_j$, the pair potential between like Ising pseudospins is less than or equal to the pair potential between opposite Ising pseudospins. This suggests the following definition of the bond occupation probability, conditioned on both $\{\sigma\}$ and $\{\phi\}$:
\begin{equation}\label{eq:bij}
    p(b_{ij}=1|\{\sigma\}, \{\phi\}) = \delta_{\sigma_i\sigma_j} p_{ij}(\{\phi\})
\end{equation}
where
\begin{equation}\label{eq:phi-pij}
    p_{ij}(\{\phi\}) = 1 - \exp[V(\phi_i, \phi_j) - V(-\phi_i, \phi_j)].
\end{equation}
Note that $p_{ij}$ depends only on the auxiliary spins; the conditional bond occupation probability in Eq.~\ref{eq:bij} depends on the Ising pseudospin only through the Kronecker delta function. This allows us to define the joint probability distribution of the $\{\sigma\}, \{b\}$ variables, conditioned on a given configuration of auxiliary spins, as:
\begin{equation}
    p(\{\sigma\},\{b\} | \{\phi\}) = Z^{-1} \prod_{\langle ij \rangle}[(1-p_{ij})\delta_{b_{ij}, 0} + p_{ij}\delta_{b_{ij}, 1}\delta_{\sigma_i\sigma_j}
\end{equation}
where $Z$ is the partition function used to normalize the probability distribution. Summing out over the bond variables yields the marginal distribution for $\{\sigma\}$ conditioned on $\{\phi\}$:
\begin{equation}\label{eq:p-sigma}
    p(\{\sigma\}|\{\phi\})=  Z^{-1} \prod_{\langle ij \rangle} e^{V(\phi_i, \phi_j)} \prod_{\langle ij \rangle}e^{-V(\sigma_i\phi_i, \sigma_j\phi_j)}.
\end{equation}
The first product depends only on the auxiliary variables while the second product is simply the Boltzmann weight associated with a given spin configuration; supplemented with an appropriately normalized definition of the $\{\phi\}$ probability distribution as $p(\{\phi\}) \propto \prod_{\langle ij \rangle}\exp(-V(\phi_i, \phi_j))$, we recover the partition function of the original Hamiltonian. Eq.~\ref{eq:p-sigma} further implies the conditional bond occupation probability as defined in Eq.~\ref{eq:bij}. The marginal distribution for the bond variables also has a simple form~\cite{Edwards1988}, but in the present context we are most interested in the resulting form of the pseudospin probability distribution conditioned on the bond variables. As with the Fortuin-Kasteleyn representation of the Ising model, Ising spins belonging to the same cluster are aligned, while different clusters are uncorrelated. Summing over all possible bond configurations, the expectation value of the Ising variable correlation is:
\begin{align}\label{eq:ising}
    \langle \sigma_i \sigma_j \rangle %&= \sum_{\{\sigma\},\{b\}} p(\{\sigma\},\{b\})\sigma_i\sigma_j \notag \\
    %&= \sum_{\{b\}} I_{\{b\}}(i\leftrightarrow j) 
    = p(i \leftrightarrow j)
\end{align}
%\gs{maybe better to shorten to one line with the identity}.
where %$I_{\{b\}}(i\leftrightarrow j)$ is an indicator function evaluating to 1 if $i$ and $j$ belong to the same cluster of bond configuration $\{b\}$, 0 otherwise, and 
$p(i \leftrightarrow j)$ is the probability, over all possible bond configurations, that $i$ and $j$ belong to the same cluster. Eq.~\ref{eq:ising} is the central identity to the FK mapping. 

While this construction works for any ferromagnetic Hamiltonian defined on XY spins (and can straightforwardly be generalized to antiferromagnetic Hamiltonians by considering a staggered correlation function), it is most useful in the context of our hard-core inclusion models. In this case, the bond configuration is fully determined by the spin configuration:
\begin{equation}
    p_{ij}(\phi) = \begin{cases} 1 & \mathrm{angle}(-\phi_i, \phi_j) \geq \Delta \\
    0 & \mathrm{angle}(-\phi_i, \phi_j) < \Delta
    \end{cases}.
\end{equation}
In words, this equation says that the bond $\langle ij \rangle$ is occupied if and only if reflecting spin $s_i$ about the chosen axis would cause the inclusion constraint with $s_j$ to be violated. This is precisely the condition for adding a spin to the cluster in line 8 of Algorithm~\ref{alg:cluster}. Thus, the \texttt{reflect} algorithm identifies and transforms one of the clusters in the bond configuration $\{b\}$, which in turn is determined by the original spin state and the randomly chosen axis of reflection.

\begin{figure}[t]
\includegraphics[width=0.6\linewidth]{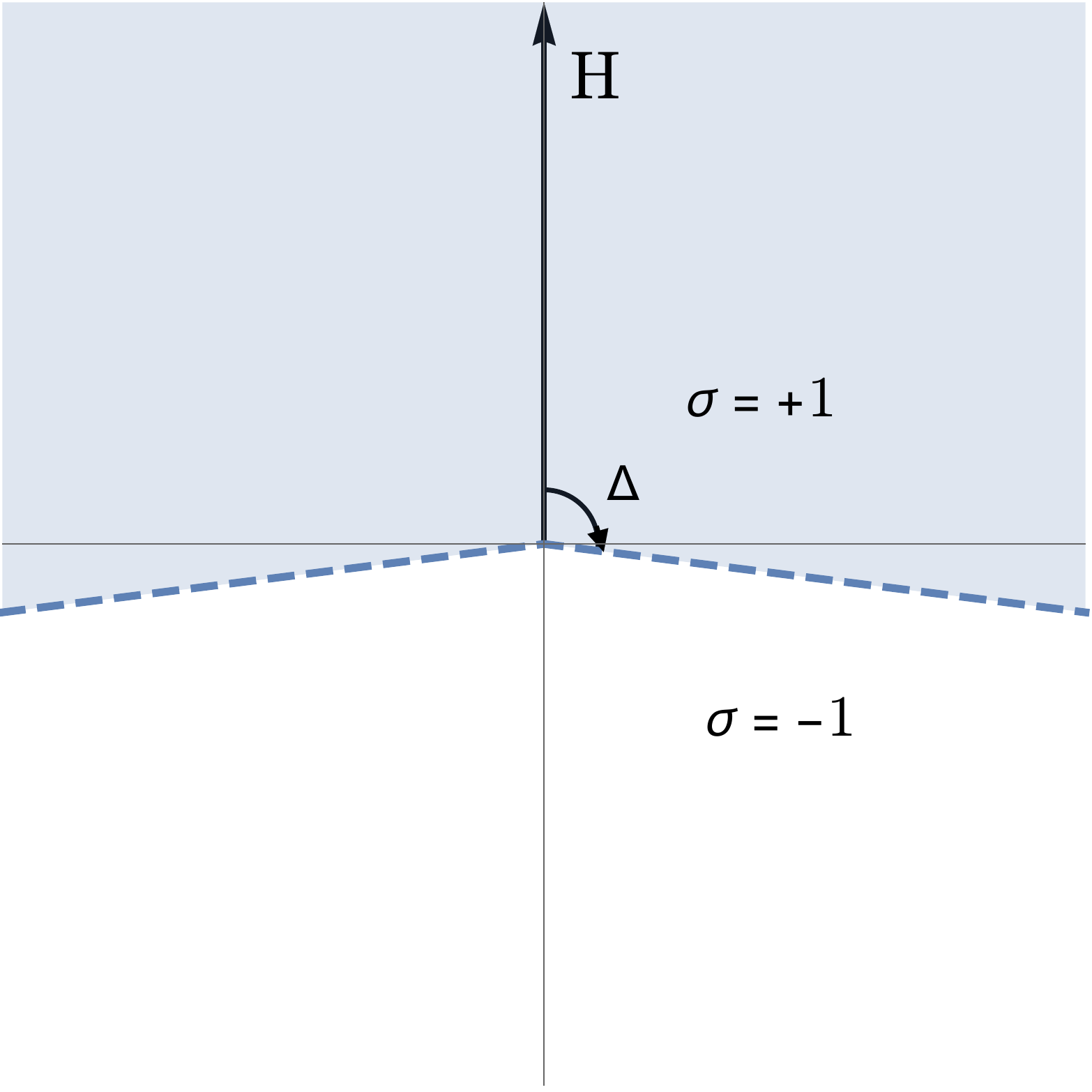}
\caption{\label{fig:ctheta} Sketch of the relationship between the Ising variable $\sigma$ and the piecewise function $C(\vartheta)$ defined in Eq.~\ref{eq:Ctheta}. For the Ising pseudospins, $\sigma=1$ for spins pointed above the $x$ axis, and $\sigma=-1$ for spins pointed below the $x$ axis. For the function $C(\vartheta)$, we define $\vartheta$ with respect to an external magnetic field $\vec{H}$ oriented along the positive $y$ axis. Then, given an inclusion angle $\Delta$, spins pointing in the blue shaded region have $C(\vartheta)=1$ and spins pointing in the unshaded region have $C(\vartheta)=-1$. For the inclusion angle shown, $C(\vartheta)=\sigma$ everywhere except in the narrow shaded region below the $x$ axis.}
\end{figure}

To relate Eq.~\ref{eq:ising} to the XY spin correlation function $\langle \vec{s}_i \cdot \vec{s}_j\rangle$, consider instead the correlation function $\langle C(\vartheta_i) C(\vartheta_j)) \rangle$, where:
\begin{equation}\label{eq:Ctheta}
    C(\vartheta) = \begin{cases} 1 & |\vartheta| < \Delta \\
    -1 & |\vartheta| \geq \Delta
    \end{cases}
\end{equation}
with $\vartheta$ the angle with respect to an external field $\vec{H}$. This correlation function was used to study the phase diagram of a finite-temperature step model with $\Delta=\pi/2$ in Ref.~\cite{Guttmann1973,Guttman1978}, in which it was speculated that the associated susceptibility scales with the same exponent as that associated with $\langle \vec{s}_i \cdot \vec{s}_j \rangle$. To justify this speculation, note that $C(\vartheta)$ can be expanded as a cosine series on the interval $[-\pi, \pi]$, yielding the correlation function:
\begin{equation}\label{eq:Cthetacorr}
\langle C(\vartheta_i) C(\vartheta_j)) \rangle = \sum_{n=1}^{\infty} a_n^2 \langle \cos(n\vartheta_i)\cos(n\vartheta_j)\rangle
\end{equation}
where $a_n$ is the Fourier coefficient of the $n$th term in the series and charge neutrality ensures that cross terms vanish~\cite{Fradkin2013}. Since the associated critical exponent of each term in the series scales as $n^2$, Eq.~\ref{eq:Cthetacorr} is dominated by the first term, which is proportional to $\langle\vec{s}_i \cdot \vec{s}_j\rangle$. The justification for using the $C(\vartheta)$ correlation function as a proxy for the XY spin correlation function is especially strong at an inclusion angle of $\Delta=\pi/2$, for which the coefficients $a_n$ are identically zero for even $n$. In this case, the subleading correction is $\langle \cos(3\vartheta_i) \cos(3\vartheta_j) \rangle$ which, although relevant, is strongly suppressed compared to the leading term $\langle \cos\vartheta_i \cos\vartheta_j \rangle$.

At $\Delta=\pi/2$, $C(\vartheta)$ is also in exact correspondence with the Ising variable $\sigma$, if we align $\vec{H}$ at an angle of $\pi/2$ with respect to the axis of reflection (Fig.~\ref{fig:ctheta}). Eq.~\ref{eq:ising} then implies:
\begin{equation}\label{eq:cthetacorrb}
    \langle C(\vartheta_i) C(\vartheta_j) \rangle =  p(i \leftrightarrow j) \qquad (\Delta=\pi/2).
\end{equation}
For inclusion angles near $\pi/2$, such as those in the neighborhood of the Kosterlitz-Thouless transition, Eq.~\ref{eq:cthetacorrb} approximately holds. Then, summing over lattice sites $i, j$ to obtain the susceptibility and recalling that the \texttt{reflect} algorithm selects one cluster in the random bond configuration with probability proportional to the size of the cluster, we arrive at the final result:
\begin{equation}\label{eq:susc}
    L^{2-\eta} \sim \chi \sim \langle s \rangle \sim L^{2-\eta'}.
\end{equation}
Eq.~\ref{eq:susc} justifies the close agreement of $\eta$ and $\eta'$ near the critical angle $\Delta_c\approx 0.56\pi$, which is fortuitously close to the angle at which Eq.~\ref{eq:cthetacorrb} holds exactly. On the other hand, the sharp deviation of $\eta'$ from $\eta$ for low inclusion angles seen in Fig.~\ref{fig:eta-v-etap} is explained by the fact that far away from $\Delta=\pi/2$, the approximate relation between the pseudospin correlation function and the $C(\vartheta)$ correlation function no longer holds.

\subsection{Scaling of the Cluster Size Distribution}
To further assess whether the \texttt{reflect} algorithm chooses the appropriate physical clusters, we measure the distribution of cluster sizes at $\Delta=0.5\pi$. Defining $n^*(s)$ as the probability density function of hitting on a cluster of size $s$, we obtain a bimodal distribution, with a large peak at small cluster sizes and a secondary peak at large cluster sizes. This indicates that the scaling of the average cluster size, from which the exponent $\eta'$ is determined, results from a subtle interplay between these two peaks. 

The cluster size distribution is typically studied in the context of the Swendsen-Wang algorithm, for which scaling forms have been derived, through the FK mapping to a critical percolation problem, to describe observables such as (1) the percolation probability $\langle P_\infty \rangle$, which exhibits the same finite size scaling $\sim L^{-\beta/\nu}$ as the net magnetization, (2) the distribution of the size of the largest (spanning) cluster in each bond configuration, whose mean scales as $L^{d_F}$ where $d_F$ is the fractal dimension, and (3) the number per site of clusters of size $s$~\cite{DOnorioDeMeo1990,Hou2018}. The scaling form of this third observable must be modified when considering a single-cluster algorithm, such as the Wolff and \texttt{reflect} algorithms. Under the hypothesis that the inclusion model at $\Delta=0.5\pi$ maps, via the above construction, to the critical point of some unknown percolation problem, we posit the scaling form~\cite{Leung1991}:
\begin{equation}\label{eq:scale}
    n^*(s, L) \sim s^{-\tau + 1}f(s/L^{d_F})
\end{equation}
where $\tilde{n}(x)$ is a universal scaling function.
\begin{figure}[t]
        \centering
        \includegraphics{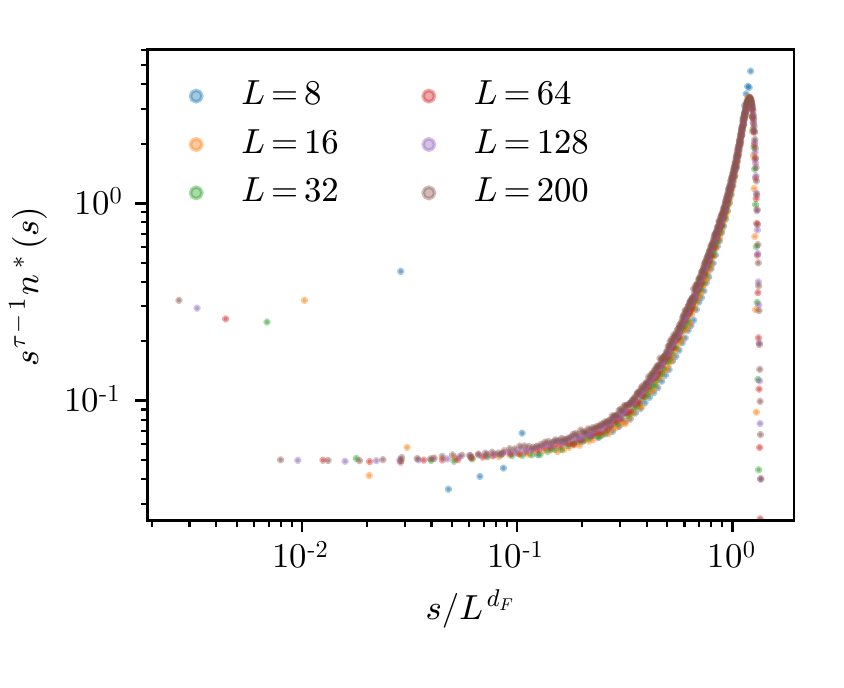}
        \caption{Rescaled probability distribution $\tilde{n}(s/L^{d_F})$ of cluster sizes at $\Delta=0.5\pi$. The horizontal axis is cluster size $s$ rescaled by $L^{d_F}$, where $d_F=1.901$ is the fractal dimension determined from Eq.~\ref{eq:dF}, using $\eta=0.198$. The vertical axis is $n^*(s)s^{\tau-1}$. Optimal data collapse was achieved using a cluster exponent of $\tau-1=1.089$, obtained from a power law fit described in the text. The distribution was measured over $\approx 8\times 10^5$ cluster moves for $L=8$, $\approx 1.1\times 10^6$ cluster moves for $L=16$, $\approx 1.65 \times 10^6$ cluster moves for $L=32$, $\approx 2.3 \times 10^6$ cluster moves for $L=64$, $\approx 3.2\times 10^6$ cluster moves for $L=128$, and $\approx 4\times 10^6$ moves for $L=200$.}
        \label{fig:distribution}
    \end{figure}    
 To verify this scaling form, in Fig.~\ref{fig:distribution} the cluster size distribution is multiplied by $s^{\tau - 1}$ and plotted as a function of $s/L^{d_F}$ for system sizes ranging from $L=8$ to $L=128$. The exponents $d_F$ and $\tau$ are related to the exponent $\eta'$ by summing Eq.~\ref{eq:scale} over $s$ to obtain the average cluster size:
\begin{align}\label{eq:dFT}
    L^{2-\eta'} \sim\langle s \rangle &= \sum_{s}  n^*(s) s\sim L^{d_F(3-\tau)} \notag \\
    \Rightarrow
    d_F &= \frac{2-\eta'}{3 - \tau}
\end{align}
Assuming hyperscaling, $d \nu = \gamma + 2\beta$~\cite{Goldenfeld1992}, the cluster exponent $\tau$ can be related to the fractal dimension $d_F$ via:
\begin{equation}\label{eq:tau1}
    \tau = 1 + d/d_F
\end{equation}
 which, when substituted into Eq.~\ref{eq:dFT} with $d=2$ implies:
 \begin{align}\label{eq:dF}
     d_F = 2 - \frac{\eta'}{2} = 2 - \frac{\eta}{2}.
\end{align}
Using the critical exponent $\eta=0.198$ at $\Delta=0.5\pi$, this implies a fractal dimension $d_F=1.901$. As a test of internal consistency, $d_F$ can be estimated via finite size scaling of  $s_{max}$, the cluster size at which $s^{\tau-1}n^*(s,L)$ is a maximum. A fit to the form $s_{max}\sim L^{d_F}$ yields $d_F=1.902 \pm 0.003$.

Substituting Eq.~\ref{eq:dF} into Eq.~\ref{eq:tau1} yields:
\begin{equation}\label{eq:tau}
\tau = 1 + \frac{4}{4-\eta}
\end{equation}
which, for $\eta=0.198$, implies a cluster exponent of $\tau = 2.052$. Again, we can verify the internal consistency of the scaling form by fitting $n^*(s) \sim s^{-\tau+1}$ in the regime $1\ll s \ll L^{d_F}$~\cite{Hou2018}. At the accessible system sizes, however, such a fit is prone to error, so we instead measure $\tau$ through a power-law fit to the value of $n^*(s,L)$ measured at the position of the secondary peak at system size $L$, vs. the position of the peak. This yields the estimate $\tau=2.089$, which is used to achieve the data collapse between different system sizes in Fig.~\ref{fig:distribution}. The good data collapse at large cluster sizes, particularly for $L\geq 16$, provides evidence in favor of the assumed scaling form as the leading behavior of the distribution. 
 
 Taken together, Figs.~\ref{fig:eta-v-etap} and~\ref{fig:distribution} illustrate the connection between the scaling exponents of the original spin model and the cluster model derived from the \texttt{reflect} algorithm. Combined with the ergodicity proven in Appendix~\ref{sect:app1}, this suggests that the algorithm is, in the same sense as the original Swendsen-Wang/Wolff algorithm for the $d=2$ Ising model, optimally adapted to our model at the value $\Delta = 0.5 \pi$. 
 
Our value of $\tau$ is quite close to the exact value $187/91$ for short-ranged percolation in $d=2$, which leads us to conjecture that our algorithm is constructing clusters precisely at the critical point of that problem. Further, if we use the exact value of $\tau$ for the scaling collapse, the results are equally impressive by eye. %In future research, it will be useful to seek an exact mapping to the standard short-ranged 2D percolation problem to which the inclusion model appears to map. 

\subsection{Long-ranged Bond Correlations}
We note that the connection to short-ranged percolation is somewhat unexpected as the underlying spin correlations are long-ranged. Decomposing the entire lattice into clusters \'a la Swendsen-Wang helps clarify this picture. The first moment of the bond distribution is measured to be $\langle b \rangle = 1/2$, in agreement with the percolation threshold $p_c$ of uncorrelated bond percolation on the square lattice on the grounds of duality~\cite{Christensen2002}. However, in contrast to that short-ranged percolation problem, we do find evidence of QLRO in the bond correlations. 
 Figure~\ref{fig:bond-corr} shows the finite size scaling of the bond susceptibility, defined analogously to the spin susceptibility as:
 \begin{equation}\label{eq:chi-b}
    \chi_b(L) = \frac{1}{L^2} \sum_{\langle ij \rangle\atop \langle kl \rangle} \left[\langle b_{ij} b_{kl} \rangle - \langle b \rangle^2\right].
\end{equation}
We find that this scales as a power law $\chi_b\sim L^{2-a}$, implying algebraic decay of the bond correlations with exponent $a$:
\begin{equation}\label{eq:lr-corr}
    \langle b_{ij} b_{kl} \rangle - \langle b \rangle^2 \propto 1/r^a
\end{equation}
 where $r$ is the distance between bonds $\langle ij \rangle$ and $\langle kl \rangle$ on the lattice. The estimated value of $a$, determined from scaling up to $L=200$, is:
 \begin{equation}\label{eq:a}
     a = 0.7870 \pm 0.00066.
 \end{equation}
\begin{figure}[t]
     \centering
     \includegraphics[width=\linewidth]{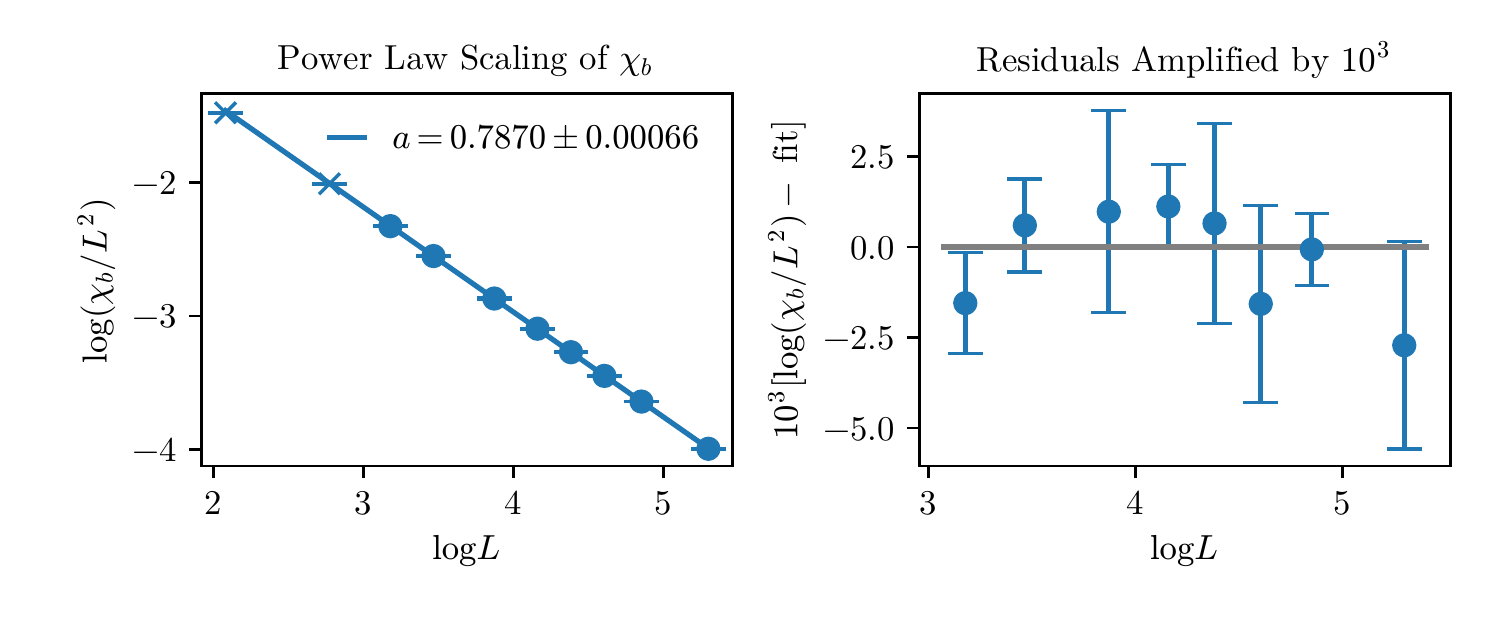}
     \caption{Finite size scaling of the bond susceptibility, defined in Eq.~\ref{eq:chi-b}, for $L=8,16,24,32,48,64,80,100,128,200$. Only data with $L\geq 24$ was included in the fit. Error bars were determined from the standard error across 100 or 1000 chunks. Right panel shows the residuals with respect to the fit, amplified by a factor of $10^3$.}
     \label{fig:bond-corr}
 \end{figure} 
 The Harris criterion applied to long-range correlated percolation suggests that correlations of the form of Eq.~\ref{eq:lr-corr} are irrelevant for $a>2/\nu$, where $\nu$ is the percolation correlation length exponent for the short-ranged percolation problem~\cite{Weinrib1984}. In this regime, the critical exponents at the percolation threshold are unaffected by the presence of long-range bond correlations. When the correlations become relevant ($a<2/\nu$), the critical exponents are expected to deviate from the uncorrelated percolation exponents in an $a$-dependent fashion.

Given that $\nu=4/3$ for short-ranged 2D percolation, Eq.~\ref{eq:a} implies that the quasi-long-ranged bond correlations are in fact relevant. However, simulation of long-range correlated site percolation on the square lattice~\cite{Prakash1992} indicates that while other critical exponents depend on $a$, the fractal dimension $d_F$ (and in turn, $\tau$) remains consistent with its uncorrelated value as $a$ decreases. Similarly, on the triangular lattice $d_F$ was found to be independent of $a$ down to $a=2/3$, below which $d_F$ increases continuously~\cite{Schrenk2013}. Thus, we conclude that at $\Delta=0.5\pi$, the cluster model produced by the \texttt{reflect} algorithm maps to standard percolation in $d=2$, with relevant quasi-long-ranged correlations that do not affect the scaling of average or largest cluster size. To further investigate this interpretation, in future research it will be worthwhile to measure other critical exponents of the percolation model that are sensitive to these correlations.
\end{document}